\newtheorem{theorem}{Theorem}[section]
\newtheorem{proposition}[theorem]{Proposition}
\newtheorem{claim}[theorem]{Claim}
\newtheorem{lemma}[theorem]{Lemma}
\newtheorem{corollary}[theorem]{Corollary}
\newcommand{\qedsymb}{\hfill{\rule{2mm}{2mm}}}
\renewenvironment{proof}[1][]{\begin{trivlist}
\item[\hspace{\labelsep}{\bf\noindent Proof#1:\/}] }{\qedsymb\end{trivlist}}
\def\calG{{\cal G}}
\def\calF{{\cal F}}
\def\R{\mathbb{R}}
\DeclareMathOperator{\supp}{supp}
\def\myand{\mbox{AND}}
\newcommand\Prob[2]{{\Pr_{#1}\left[ {#2} \right]}}
\newcommand\expectation{\mathop{{\mathbb{E}}}}
\newcommand{\eps}{\epsilon}
\renewcommand{\epsilon}{\varepsilon}
\DeclareMathOperator{\dist}{dist}
\newcommand{\sign}{\mathop{\mathrm{sign}}}
\newcommand{\Fset}{\mathbb{F}}         % The integers
\begin{document}

\title{{\bf The List-Decoding Size of Fourier-Sparse Boolean Functions}}

\author{
Ishay Haviv\thanks{School of Computer Science, The Academic College of Tel Aviv-Yaffo, Tel Aviv 61083, Israel.}
\and Oded Regev\thanks{Courant Institute of Mathematical Sciences, New York University. Supported by the Simons Collaboration on Algorithms and Geometry and by the National Science Foundation (NSF) under Grant No.~CCF-1320188. Any opinions, findings, and conclusions or recommendations expressed in this material are those of the authors and do not necessarily reflect the views of the NSF.}
}

\date{}

\maketitle
\thispagestyle{empty}

%%%% DON'T REMOVE %%%%%
%\noteswarning
%%%% DON'T REMOVE %%%%%

\begin{abstract}
A function defined on the Boolean hypercube is {\em $k$-Fourier-sparse} if it has at most $k$
nonzero Fourier coefficients.
For a function $f: \Fset_2^n \rightarrow \R$ and parameters $k$ and $d$, we prove a
strong upper bound on the number of $k$-Fourier-sparse Boolean functions that
disagree with $f$ on at most $d$ inputs. Our bound implies that the number of uniform and
independent random samples needed
for learning the class of $k$-Fourier-sparse Boolean functions on $n$ variables exactly is at
most $O(n \cdot k \log k)$.

As an application, we prove an upper bound on the query complexity of testing Booleanity of
Fourier-sparse functions. Our bound is tight up to a logarithmic factor and quadratically improves
on a result due to Gur and Tamuz (Chicago~J.~Theor.~Comput.~Sci.,~2013).
\end{abstract}

%\newpage
%\setcounter{page}{1}

\section{Introduction}

Functions defined on the Boolean hypercube $\{0,1\}^n = \Fset_2^n$
are fundamental objects in theoretical computer science.
It is well known that every such function
$f: \Fset_2^n \rightarrow \R$ can be represented as a linear
combination
\[
f = \sum_{S \subseteq [n]}{\hat{f}(S) \cdot \chi_S}
\]
of the $2^n$ functions $\{\chi_S\}_{S \subseteq [n]}$ defined
by $\chi_S(x) = (-1)^{\sum_{i \in S}{x_i}}$. This representation
is known as the {\em Fourier expansion} of the function $f$,
and the numbers $\hat{f}(S)$ are known as its \emph{Fourier coefficients}.
The Fourier expansion of functions plays a central role in analysis
of Boolean functions and finds applications in numerous areas of
theoretical computer science including learning theory, property testing,
hardness of approximation, social choice theory, and
cryptography. For an in-depth introduction to the topic the reader
is referred to the book of O'Donnell~\cite{O'Book14}.

A classical result in learning theory is a general algorithm due to Kushilevitz
and Mansour~\cite{KushilevitzM93}, based on results of Linial, Mansour, and Nisan~\cite{LinialMN93} and
Goldreich and Levin~\cite{GoldreichL89}, which enables to efficiently learn classes of Boolean
functions with a ``simple'' Fourier expansion. A common notion of simplicity of Fourier
expansion is its sparsity. A function is said to be {\em $k$-Fourier-sparse}
if it has at most $k$ nonzero Fourier coefficients. It follows
from~\cite{KushilevitzM93} that given {\em query}
access to a $k$-Fourier-sparse Boolean function
$f: \Fset_2^n \rightarrow \{0,1\}$
it is possible to estimate its Fourier coefficients and to get a
good approximation of $f$ in
running time polynomial in $n$ and $k$. Later, it was shown that
such running time even allows
to reconstruct the function $f$ exactly~\cite{GopalanOSSW11}.

In recent years, properties of the Fourier expansion of functions were
studied in the property testing framework. We now mention some of those
results; since this will not be needed for the sequel, the reader can
skip directly to the description of our results in the next section.
Gopalan, O'Donnell, Servedio, Shpilka, and Wimmer considered in~\cite{GopalanOSSW11} the problem of testing
if a given Boolean function is $k$-Fourier-sparse or $\eps$-far from any such function.
Another problem studied there is that of deciding if a function is
\emph{$k$-Fourier-dimensional}, that is, the Fourier support, viewed as a subset of $\Fset_2^n$,
spans a subspace of dimension at most $k$, or $\eps$-far from satisfying this property.
Gopalan et al.~\cite{GopalanOSSW11} established testers for these properties whose query
complexities depend only on $k$ and $\eps$. For $k$-Fourier-sparsity the query complexity
was a certain polynomial in $k$
and $1/\eps$ and for $k$-Fourier-dimensionality it was $O(k \cdot 2^{2k}/\eps)$.
They also proved lower bounds of $\Omega(\sqrt{k})$ and $\Omega(2^{k/2})$ respectively.
Another parameter associated with Boolean functions is the degree of its
representation as a polynomial over $\Fset_2$.
The algorithmic task of testing if a function has $\Fset_2$-degree
at most $d$ or is $\eps$-far
from any such function was considered by Alon et al.~\cite{AlonKKLR05} and then by
Bhattacharyya et al.~\cite{BhattacharyyaKSSZ10}, who proved
tight upper and lower bounds of $\Theta(2^d+1/\eps)$ on the
query complexity. Note that all the above properties fall
into the class of linear-invariant properties, i.e., properties that
are closed under compositions with any invertible linear transformation of the domain.
These properties have recently attracted a significant amount of attention
in the attempt to characterize efficient testability of them
(see~\cite{Sudan10,Bhattacharyya13} for related surveys).

\subsection{Our Results}

\paragraph{List-decoding size.}
Our main technical result from which we derive all other results
is concerned with the list-decoding size of Fourier-sparse Boolean functions.
In general, the list-decoding problem of an error correcting code for a distance parameter $d$
asks to find all the codewords whose Hamming distance from a given word is at most $d$. Here we
consider the (non-linear) binary code of block length $2^n$ whose codewords represent all the $k$-Fourier-sparse
Boolean functions on $n$ variables.

It is not difficult to show that the total number of such functions is at most $2^{O(nk)}$.
Indeed, there are $2^{O(nk)}$ ways to choose the support of $\hat{f}$, and $2^{O(nk)}$
ways to set those Fourier coefficients which must all be integer multiples of $2^{-n}$ in $[-1,+1]$.
It is also not difficult to show that the distance between any two distinct codewords
is at least $2^n/k$. Indeed, it is known that every $k$-Fourier-sparse
Boolean function has $\Fset_2$-degree $d \leq \log_2 k$ (see, e.g.,~\cite[Lemma~3]{BernasconiC99}), and therefore, by the Schwartz-Zippel lemma, every two distinct $k$-Fourier-sparse Boolean functions disagree on at least $1/k$ fraction of the inputs.
As a result, for every function $f: \Fset_2^n \rightarrow \R$ there
is at most one codeword of distance smaller than $2^{n}/(2k)$ from $f$.

We are not aware of any other known bounds beyond those two naive ones.
We address this question in the following theorem.

\begin{theorem}\label{thm:IntroList}
For every function $f : \Fset_2^n \rightarrow \R$, the number of $k$-Fourier-sparse Boolean functions of distance at most $d$ from $f$ is $2^{O(nd k \log k/2^n)}$.
\end{theorem}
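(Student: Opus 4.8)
The plan is to assume first that $f$ itself is Boolean: replacing $f$ by a nearest Boolean function moves every relevant $g$ by at most $d$, so $d$ only grows by a factor of $2$, which is harmless inside the $O(\cdot)$ in the exponent. Write $\delta = d/2^n$ for the relative distance. Since every $k$-Fourier-sparse Boolean function has $\Fset_2$-degree at most $\log_2 k$, any two distinct such functions disagree on at least $2^n/k$ inputs; hence (comparing any two candidates through the triangle inequality) the statement is nontrivial only once $\delta \gtrsim 1/(2k)$, and once $\delta \gtrsim 1/\log k$ the target already exceeds the trivial count $2^{O(nk)}$ of \emph{all} $k$-Fourier-sparse Boolean functions, so I may restrict to $1/(2k) \le \delta \le 1/\log k$.

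The counting engine I would build first is a per-support estimate. Fix a candidate Fourier support $T \subseteq 2^{[n]}$ with $|T| \le k$ and let $V_T = \linspan\{\chi_S : S \in T\}$. Choosing $k$ evaluation points at which the matrix $(\chi_S(p))_{S \in T,\,p}$ is invertible shows that restriction of $V_T$ to those points is injective, so $V_T$ contains at most $2^k$ Boolean functions. Now draw $t = \Theta(k/\delta)$ uniformly random points $P$; a Chernoff bound together with a union bound over these (at most $2^k$) Boolean functions shows that, with positive probability, every Boolean function of $V_T$ lying within distance $d$ of $f$ has its restriction to $P$ inside the Hamming ball of radius $O(t\delta)$ around $f|_P$, while a generic $P$ of size $t \ge k$ keeps restriction injective. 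Counting points in that ball yields a bound of $2^{t \cdot H(O(\delta))} = 2^{O(k \log(1/\delta))}$ on the number of near-$f$ Boolean functions with support exactly $T$, where $H$ denotes the binary entropy function. This already has the right shape in $\delta$ and $k$.

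The main obstacle is that summing this per-support bound over all $\binom{2^n}{\le k} = 2^{O(nk)}$ possible supports washes out the distance dependence and recovers only the trivial $2^{O(nk)}$: the factor for \emph{naming} the support dominates. Equivalently, the natural global shortcut --- identify each $g$ by its values on a random point set, then bound the number of functions by the number of signatures inside a small Hamming ball --- is circular: separating $m$ mutually far functions needs $\Omega(k \log m)$ points, and the signature ball grows with the number of points, so the resulting inequality $m \le 2^{O(k\delta\log(1/\delta)\log m)}$ is vacuous. Breaking this circularity is the crux.

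To get around it I would run an induction on $n$ via restriction to subcubes rather than enumerating supports. Splitting on a single variable $x_n$ writes $g$ as the pair $(g_0,g_1)$ of its restrictions to $\{x_n=0\}$ and $\{x_n=1\}$, with $\dist(g_0,f_0)+\dist(g_1,f_1)\le d$. The crucial structural point, which prevents the two halves from being counted as independent factors (and hence prevents the exponent from doubling across the $n$ levels), is that the Fourier supports of $(g_0+g_1)/2$ and $(g_0-g_1)/2$ partition $\supp(\hat g)$: the total sparsity budget $k$ is split between the halves rather than duplicated. I would therefore set up a recursion in which one half is counted by the per-support engine above and the other only up to a low-sparsity correction, so that each of the $n$ levels contributes an additive $O(\delta k \log k)$ to the exponent and the recursion telescopes to the claimed $2^{O(n \delta k \log k)} = 2^{O(ndk\log k/2^n)}$. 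I expect the delicate accounting in this recursion --- tracking how the sparsity and distance budgets are apportioned between the two subcubes so that the coupling genuinely defeats the doubling --- to be the hardest part of the argument.
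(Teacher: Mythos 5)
There is a genuine gap. You correctly isolate the crux --- that paying $2^{O(nk)}$ to name a Fourier support destroys the distance dependence --- but your proposed way around it (induction on $n$ via subcube restrictions) is only a sketch, and the sketch runs into exactly the doubling it is meant to defeat. The place where the sparsity budget genuinely splits is the pair $u=(g_0+g_1)/2$, $v=(g_0-g_1)/2$ (supports $k_+ + k_- \le k$), but these live on a cube of size $2^{n-1}$ and their disagreement supports with the corresponding combinations of $f$ can \emph{coincide}: if $g_1=f_1$, then $\supp(u-f_u)=\supp(v-f_v)=\supp(g_0-f_0)$, so both can be at distance up to $d$, i.e., relative distance up to $2\delta$. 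Plugging the ansatz $2^{C(n-1) d k_{\pm}\log k/2^{n-1}}$ into the recursion gives a product $2^{2C(n-1)dk\log k/2^n}$, which exceeds the target $2^{Cndk\log k/2^n}$ for all $n>2$; the recursion does not telescope. The alternative you hint at --- count the Boolean half $g_0$ fully and recover $g_1$ from the ``low-sparsity correction'' $g_0-g_1$ --- also fails: that correction is $k_-$-sparse but its support has size $\dist(g_0,g_1)$, which is governed by $\dist(f_0,f_1)$, a quantity you do not control even after replacing $f$ by a Boolean function; and $k_-$-sparse functions with unconstrained support number $2^{\Theta(nk_-)}$, reinstating the support-naming cost. (A secondary issue: in your per-support engine, injectivity of restriction to $t=\Theta(k/\delta)$ random points cannot be obtained by a union bound over the $2^{\Theta(k)}$ pairs once $\delta \gg 1/k$, since a pair at distance $2^n/k$ collides with probability $e^{-\Theta(1/\delta)}$.)

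The idea you are missing, and which the paper uses to break the circularity, is to encode candidates in the \emph{Fourier} domain rather than by evaluation signatures, exploiting that small support forces small \emph{spectral norm}. For the difference $h=f-g$ (after your same reduction to Boolean $f$), $h$ has range $\{-1,0,+1\}$ and $|\supp(h)|\le 2d$, so Cauchy--Schwarz plus Parseval give $\|\hat h\|_1 \le \sqrt{2k\cdot 2d/2^n}$. Following Bruck and Smolensky, $h/\|\hat h\|_1$ is a convex combination of signed characters, i.e., an expectation over a distribution on sets $S$; sampling $O(\|\hat h\|_1^2\log k)=O(dk\log k/2^n)$ characters from this distribution and averaging approximates $h$ within $1/2$ on all but a $1/(5k)$ fraction of inputs. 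The sampled list (each character named by $n$ bits), the signs, and $\|\hat h\|_1$ form a string of $O(ndk\log k/2^n)$ bits, and since distinct $2k$-sparse functions are at distance $\ge 2^n/(4k) > 2\cdot 2^n/(5k)$ apart, such an approximation determines $h$ (hence $g$) uniquely. The count is then at most the number of strings, $2^{O(ndk\log k/2^n)}$, with no support enumeration and no evaluation-point signatures anywhere --- which is precisely why the circularity you describe never arises.
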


We observe that for certain choices of $k$ and $d$ the bound given in Theorem~\ref{thm:IntroList} is tight.
For example, let $f$ be the constant zero function,
let $k<2^{0.9n}$ be a power of $2$, and take $d = 2^{n}/k$.
Consider all the indicator functions of linear subspaces of $\Fset_2^n$
of co-dimension $\log_2 k$. Every such function is of distance $d$ from
$f$ and is $k$-Fourier-sparse (see Claim~\ref{claim:affineSparse}). The
number of such functions is $2^{\Theta(n \log k)} = 2^{\Theta(n dk \log k /2^n)}$.

\paragraph{Learning from samples.}
As an application of the list-decoding bound, we next consider
the problem of learning the class of $k$-Fourier-sparse
Boolean functions on $n$ variables (exactly) from uniform and independent
random \emph{samples}~(see, e.g.,~\cite{AndoniPV014,KSDK14} for related work).
Let us note already at the outset that all the results mentioned here are \emph{not} efficient:
it is not known if there is an algorithm for the problem whose
running time is some fixed polynomial in $n$ times an arbitrary function of $k$.
Among other things, such an algorithm would imply a breakthrough on the long-standing open question
of learning juntas from samples~\cite{Blum03,MosselOS04,Valiant12,KSDK14}.

The question of recovering a function that
is sparse in the Fourier (or other) basis from a few samples
is the central question in the area of sparse recovery.
It has been intensely investigated for over a decade
and, among other things, has applications for compressed sensing and for the data stream model.
The best previously known bounds on our question
are $O(n \cdot k \log^3 k) \leq O(n^4 \cdot k)$ due to Cheraghchi, Guruswami, and Velingker~\cite{CheraghchiGV13}
and $O(n^2 \cdot k \log k) \leq O(n^3 \cdot k)$ due to Bourgain~\cite{Bourgain14},
improving on a previous bound of Rudelson and Vershynin~\cite{RudelsonV08}
(who themselves improved on the work of Cand{\`e}s and Tao~\cite{CandesT}).
We note in passing that they actually answer a harder question: first, because
they handle all functions, not necessarily Boolean-valued; second,
because they show that a randomly chosen set of sample locations
of the above cardinality is good with high probability simultaneously \emph{for all} $k$-Fourier-sparse
functions (sometimes known as the ``deterministic'' setting), whereas we only want a random set of
sample locations to be good with high probability for any \emph{fixed}
$k$-Fourier-sparse function (the ``randomized'' setting);
finally, because they obtain the recovery result by proving a
``restricted isometry property'' of the Fourier matrix
which among other things implies a recovery algorithm running in time polynomial in $2^n$ and $k$.

Using Theorem~\ref{thm:IntroList}, we improve the upper bound on the
sample complexity of learning Fourier-sparse Boolean functions.

\begin{corollary}\label{cor:IntroSample}
The number of uniform and independent random samples required for learning the class of $k$-Fourier-sparse Boolean functions on $n$ variables is $O(n \cdot k \log k)$.
\end{corollary}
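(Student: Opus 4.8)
The plan is to show that $m = O(nk\log k)$ uniform random samples suffice, via the (inefficient) learner that outputs an arbitrary $k$-Fourier-sparse Boolean function consistent with all observed samples. Since the target function $g$ is itself always consistent, the learner can err only if some other $k$-Fourier-sparse Boolean function $h \neq g$ agrees with $g$ on every sampled input. First I would fix $g$ and bound the failure probability $\Pr[\exists\, h \neq g \text{ that is $k$-Fourier-sparse and consistent with all } m \text{ samples}]$, where the randomness is over the uniform i.i.d.\ choice of sample locations.

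For a single competitor $h$ with $\dist(g,h) = d'$, each independent sample fails to separate $h$ from $g$ with probability exactly $1 - d'/2^n$, so the probability that all $m$ samples fail is $(1 - d'/2^n)^m \le e^{-m d'/2^n}$. Recalling the minimum-distance bound from the introduction---any two distinct $k$-Fourier-sparse Boolean functions have $\Fset_2$-degree at most $\log_2 k$ and hence by Schwartz--Zippel disagree on at least a $1/k$ fraction of inputs, so $d' \ge 2^n/k$---I would partition the competitors into dyadic distance buckets $B_j = \{h : d_j \le \dist(g,h) < 2 d_j\}$ with $d_j = 2^j \cdot 2^n/k$. Since $d'$ ranges over $[2^n/k, 2^n]$, a span of a factor $k$, there are only $O(\log k)$ such buckets.

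The crux is to balance the two competing effects inside each bucket. By Theorem~\ref{thm:IntroList} applied with $f = g$, the number of $k$-Fourier-sparse Boolean functions at distance at most $2d_j$ is $2^{O(n d_j k \log k/2^n)}$, which upper-bounds $|B_j|$; every member of $B_j$ survives all $m$ samples with probability at most $e^{-m d_j/2^n}$. Writing $t_j = d_j/2^n \in [1/k,1]$, the expected number of surviving competitors in $B_j$ is at most
\[
2^{O(n d_j k \log k/2^n)} \cdot e^{-m d_j/2^n} = \exp\!\bigl(t_j\,(c\,nk\log k - m)\bigr)
\]
for an absolute constant $c$. Choosing $m = 2c\,nk\log k = O(nk\log k)$ makes this at most $\exp(-c\,nk\log k \cdot t_j) \le \exp(-c\,n\log k)$, using $t_j \ge 1/k$ at the smallest (and dominant) scale. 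Summing over the $O(\log k)$ buckets keeps the total expected number of surviving competitors far below $1$, so by Markov's inequality $g$ is the unique consistent function---and hence is recovered exactly---with high probability.

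The step I expect to be the main obstacle is precisely this balancing argument: a naive union bound over all $2^{O(nk)}$ $k$-Fourier-sparse functions would force $m = \Omega(nk^2)$, losing a factor of $k/\log k$. What rescues the bound is that Theorem~\ref{thm:IntroList} is \emph{distance-sensitive}---few functions lie close to $g$, where samples are least informative, while the many distant functions are killed off rapidly---so that a single value $m = O(nk\log k)$ simultaneously controls every distance scale, with the binding constraint occurring exactly at the minimum distance $d' = 2^n/k$.
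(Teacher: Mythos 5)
Your proposal is correct and follows essentially the same route as the paper's proof (Theorem~\ref{thm:learning}): partition competing $k$-Fourier-sparse Boolean functions into dyadic distance buckets starting at the minimum distance $\approx 2^n/k$, bound each bucket's size by the distance-sensitive list-decoding bound of Theorem~\ref{thm:IntroList}, kill each bucket member with survival probability $e^{-m\,\dist/2^n}$, and union-bound (equivalently, your expectation-plus-Markov step) over buckets, with the binding constraint at the smallest scale. The only cosmetic differences are the indexing of the buckets and your use of the Schwartz--Zippel minimum distance $2^n/k$ in place of the paper's uncertainty-principle bound $2^n/(2k)$.
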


\noindent
We believe that our better bound and its elementary proof shed more
light on the problem and might be useful elsewhere.
In fact, in a follow-up work~\cite{HavivR15rip} we employ the techniques developed here to study
the ``restricted isometry property'' of random submatrices of Fourier (and other) matrices,
improving on the aforementioned works~\cite{CheraghchiGV13,Bourgain14}.
We finally note that a lower bound of $\Omega(k \cdot (n- \log_2 k))$ on the sample complexity can be
obtained by considering the problem of learning indicator functions of affine subspaces of $\Fset_2^n$ of
co-dimension $\log_2 k$ (see Theorem~\ref{thm:LearningLower}; see, e.g.,~\cite{BaIPW10} for the same lower bound in a
different setting).

\paragraph{Testing Booleanity.}
We next consider the problem of testing Booleanity of Fourier-sparse functions, which was
introduced and studied by Gur and Tamuz in~\cite{GurT13}. In this problem,
given access to a $k$-Fourier-sparse function $f: \Fset_2^n \rightarrow \R$,
one has to decide if $f$ is Boolean,
i.e., its image is contained in $\{0,1\}$, or not. The objective is to distinguish between the
two cases with some constant probability using as few queries to $f$ as possible. It was shown
in~\cite{GurT13} that there exists a
(non-adaptive one-sided error) tester for the problem with query complexity $O(k^2)$, and that every tester for the problem has query complexity $\Omega(k)$. Here, we use our result on learning $k$-Fourier-sparse Boolean functions to improve the upper bound of~\cite{GurT13} and prove the following.

\begin{theorem}\label{thm:BoolUpperIntro}
For every $k$ there exists a non-adaptive one-sided error tester that using $O(k \cdot \log^2 k)$ queries to an input $k$-Fourier-sparse function $f: \Fset_2^n \rightarrow \R$ decides if $f$ is Boolean or not with constant success probability.
\end{theorem}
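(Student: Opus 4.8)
The plan is to reduce testing Booleanity to the list-decoding bound of Theorem~\ref{thm:IntroList} by working on a random low-dimensional affine subspace. The tester I would construct is non-adaptive and one-sided: pick a uniformly random affine subspace $V\subseteq\Fset_2^n$ of dimension $m=2\log_2 k+O(1)$ (realized as $V=\{My+b:y\in\Fset_2^m\}$ for a uniform $n\times m$ matrix $M$ and shift $b$), draw $t=\Theta(m\cdot k\log k)=\Theta(k\log^2 k)$ points $Q$ independently and uniformly from $V$, query $f$ on $Q$, and \emph{accept iff there exists a $k$-Fourier-sparse Boolean function $g$ on $V$ with $g|_Q=f|_Q$}. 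Since the restriction of a $k$-Fourier-sparse function to an affine subspace is again $k$-Fourier-sparse, when $f$ itself is Boolean the function $g=f|_V$ witnesses acceptance with probability $1$; this gives the one-sided guarantee. The query complexity is $|Q|=O(k\log^2 k)$ as required, so the whole content is in the soundness analysis.

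For soundness, suppose $f$ is non-Boolean. First I would record two consequences of Fourier sparsity, both instances of the standard fact that a nonzero $s$-Fourier-sparse function is nonzero on at least a $1/s$ fraction of its domain. Applying it to $h=f^2-f$, which is $O(k^2)$-sparse and nonzero precisely because $f$ is non-Boolean (as $h(x)=0$ iff $f(x)\in\{0,1\}$), the set $B=\{x:f(x)\notin\{0,1\}\}$ has density $\beta\ge 1/O(k^2)$. Since $V$ consists of $2^m=\Theta(k^2)$ points that are individually uniform and pairwise independent, $\mathbb{E}|V\cap B|=2^m\beta=\Omega(1)$, and a second-moment argument shows $V\cap B\ne\emptyset$ with good probability; I would fix the constant in $m$ so this probability is at least $9/10$. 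On this event $f|_V$ is non-Boolean, so for \emph{every} $k$-Fourier-sparse Boolean $g$ on $V$ the difference $f|_V-g$ is a nonzero $O(k)$-sparse function, whence $\dist_V(f|_V,g)\ge 2^m/(2k)$; this is the minimum-distance bound that drives the final estimate.

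The heart of the argument is then a union bound over all candidate functions $g$, organized by distance scale and controlled by the list-decoding bound. A fixed $g$ ``survives'' (satisfies $g|_Q=f|_Q$) only if $Q$ misses the disagreement set $\{x\in V:f(x)\ne g(x)\}$, which happens with probability $(1-\dist_V(f|_V,g)/2^m)^t\le\exp(-t\cdot\dist_V(f|_V,g)/2^m)$. Partitioning the $g$'s into layers with $\dist_V(f|_V,g)\in[2^j,2^{j+1})$, Theorem~\ref{thm:IntroList} applied to $f|_V:\Fset_2^m\to\R$ bounds the number of $g$ within distance $2^{j+1}$ by $2^{O(m\cdot 2^{j+1}k\log k/2^m)}$. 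The contribution of layer $j$ is therefore at most $\exp\!\big((2^j/2^m)(O(mk\log k)-t)\big)$, so taking $t$ a large enough constant times $mk\log k$ makes the bracket negative; using the minimum distance $2^j\ge 2^m/(2k)$, every layer is at most $\exp(-\Omega(m\log k))$, and summing over the $O(\log k)$ layers still gives $o(1)$. Hence with high probability no $g$ survives, the acceptance predicate fails, and the tester rejects.

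I expect the main obstacle to be exactly this balancing act: the two requirements on $m$ pull in opposite directions — hitting $B$ forces $2^m\gtrsim k^2$ (so $m\gtrsim 2\log k$), while the query budget forces $t=O(mk\log k)$ to stay $O(k\log^2 k)$ (so $m=O(\log k)$) — and the union bound closes only because the list-decoding count at distance $d$ grows like $2^{O(mdk\log k/2^m)}$, precisely the rate cancelled by the survival decay $e^{-td/2^m}$, while the sparsity-induced minimum distance $2^m/(2k)$ keeps even the smallest layer inside the regime where $t=\Theta(mk\log k)$ queries suffice. Replacing Theorem~\ref{thm:IntroList} by the naive count $2^{O(mk)}$ of all $k$-Fourier-sparse Boolean functions on $V$ would blow the query bound up to $\Omega(mk^2)$, so the list-decoding size is exactly what makes the improved bound possible; the remaining care is in verifying the pairwise independence of the random subspace and the edge case $n=O(\log k)$, where one simply takes $V=\Fset_2^n$.
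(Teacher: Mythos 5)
Your proposal is correct and follows essentially the same route as the paper: restrict $f$ to a random subspace of dimension $2\log_2 k + O(1)$, query $O(k\log^2 k)$ uniform points there, accept iff the samples are consistent with some $k$-Fourier-sparse Boolean function, and prove soundness by the minimum-distance bound from the uncertainty principle plus a union bound over distance scales controlled by Theorem~\ref{thm:IntroList} (the paper packages this last part as its learning theorem, Theorem~\ref{thm:learning}, which you simply inline). The only divergence is in the hitting lemma: you use an affine subspace and a pairwise-independence/second-moment argument to show the restriction stays non-Boolean, whereas the paper (Lemma~\ref{lemma:restriction}) uses a chain of random hyperplane restrictions exploiting a set of linearly independent non-Boolean points; both arguments are valid and give the same parameters.
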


We note that, while the tester established in Theorem~\ref{thm:BoolUpperIntro}
has an improved query complexity, it is not clear if it is efficient with
respect to running time. It can be shown, though, that using
the learning algorithm of Fourier-sparse functions
that follows from~\cite{Bourgain14,HavivR15rip} (instead of Corollary~\ref{cor:IntroSample})
in our proof of Theorem~\ref{thm:BoolUpperIntro},
one can obtain an efficient algorithm (running in time polynomial in $n$ and $k$) with the slightly worse query complexity of $O(k \cdot \log^3 k)$.

Finally, we complement Theorem~\ref{thm:BoolUpperIntro} by the following nearly matching lower bound.

\begin{theorem}\label{thm:BoolLowerIntro}
Every non-adaptive one-sided error tester for Booleanity of $k$-Fourier-sparse functions has query complexity $\Omega(k \cdot \log k)$.
\end{theorem}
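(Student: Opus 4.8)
The plan is to prove a lower bound of $\Omega(k\log k)$ on the query complexity of any non-adaptive one-sided error tester for Booleanity. I would exploit the same family of extremal examples used elsewhere in the paper: indicator functions of affine (or linear) subspaces of $\Fset_2^n$ of co-dimension $\log_2 k$, which by Claim~\ref{claim:affineSparse} are $k$-Fourier-sparse and Boolean. The key idea for a one-sided error lower bound is that a one-sided tester must \emph{never} reject a Boolean function: whenever it queries a function that happens to be Boolean on the queried points, it has no certificate to reject. So the task reduces to exhibiting, for any fixed set $Q$ of query locations that is too small, a $k$-Fourier-sparse non-Boolean function $g$ that nonetheless \emph{agrees with some genuine $k$-Fourier-sparse Boolean function on all of $Q$}, thereby fooling the tester into accepting.

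First I would set up the hard distribution over non-Boolean inputs. A natural construction is to take a $k$-Fourier-sparse Boolean function $f_0$ (say an indicator of a co-dimension $\log_2 k$ subspace) and perturb it slightly off Boolean values while keeping Fourier sparsity at most $k$ and keeping the restriction to $Q$ Boolean. Concretely, I would consider functions of the form $f_0 + c\cdot \chi_S$ or, more flexibly, functions supported on the same small Fourier support as the subspace indicators but with coefficients tweaked so the function takes a non-Boolean value somewhere outside $Q$. The combinatorial heart is then a counting/covering argument: there is a large family $\mathcal{F}$ of $k$-Fourier-sparse Boolean functions (the $2^{\Theta(n\log k)}$ affine-subspace indicators), and a non-adaptive tester reading $t$ locations effectively chooses, possibly at random, a set $Q$ with $|Q|=t$. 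For the tester to distinguish the non-Boolean input from \emph{every} Boolean function it could be confused with, the pattern of values on $Q$ must pin down enough information; I would argue that when $t = o(k\log k)$, the random set $Q$ fails with constant probability to separate a planted non-Boolean function from the Boolean family.

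The cleanest route is probably a direct Yao-style argument via an adversary/indistinguishability pair rather than full counting. For a fixed query set $Q$ of size $t$, I would bound the probability (over the random choice of a subspace or of the planted perturbation) that the queried values already witness non-Booleanity. A one-sided tester can only reject if the queried values are themselves inconsistent with any Boolean $k$-Fourier-sparse function, equivalently if no $k$-Fourier-sparse Boolean interpolant of the queried values exists; so I must show that with constant probability the restriction $f|_Q$ of my non-Boolean instance \emph{coincides} with the restriction of an honest Boolean $k$-sparse function. Using the subspace family, the restriction to $Q$ is governed by which of the $t$ points lie in the subspace, and a random co-dimension $\log_2 k$ subspace places only about $t/k$ of them inside; as long as $t = o(k\log k)$, the information revealed on $Q$ (at most roughly $t$ bits, but effectively much less because of the subspace structure and the need to simultaneously rule out all $2^{\Theta(n\log k)}$ Boolean candidates) is insufficient to certify non-Booleanity.

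The main obstacle will be making the indistinguishability precise: I need the planted non-Boolean function and the honest Boolean family to be genuinely $k$-Fourier-sparse \emph{and} to have query-restrictions that collide with constant probability over a $o(k\log k)$-size query set. The delicate point is extracting the logarithmic factor — the bare $\Omega(k)$ bound follows from the distance $2^n/k$ alone, but squeezing out the extra $\log k$ requires that the number of Boolean candidates the tester must simultaneously exclude is $2^{\Theta(n\log k)}$ (matching the list-decoding count), so that a union-bound / entropy argument forces $t = \Omega(k\log k)$. I would therefore aim to combine the subspace-indicator family of size $2^{\Theta(n\log k)}$ with a counting argument showing that any fixed small $Q$ is, with constant probability over the family, unable to distinguish a planted non-Boolean perturbation from all members of the family at once, and only at the end verify the sparsity bookkeeping via Claim~\ref{claim:affineSparse}.
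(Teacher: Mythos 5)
Your general framework is the right one for one-sided error---such a tester may reject only when the queried values are inconsistent with \emph{every} $k$-Fourier-sparse Boolean function, so the job is to exhibit a distribution over non-Boolean instances whose restriction to any fixed small query set $Q$ agrees, with constant probability, with the restriction of an honest Boolean $k$-sparse function. But your proposed hard instances cannot yield the claimed bound. A perturbation $f_0 + c\cdot\chi_S$ with $0<|c|<1$ takes values in $\{\pm c, 1\pm c\}$, none of which lies in $\{0,1\}$, so it is non-Boolean at \emph{every} point and a single query catches it (for other $c$ it is still non-Boolean on a constant fraction of points). Your ``more flexible'' variant---keeping the Fourier support of a co-dimension-$\log_2 k$ subspace indicator $\mathbb{1}_V$ and tweaking coefficients---produces exactly the functions that are constant on the cosets of $V$; any non-Boolean such function is then non-Boolean on an entire coset, i.e., on at least a $1/k$ fraction of inputs. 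Consequently the trivial non-adaptive one-sided tester that queries $O(k)$ uniform points and rejects iff it sees a non-Boolean value already rejects your instances with probability $2/3$, so your distribution is inherently limited to an $\Omega(k)$ bound. This is precisely the obstruction the paper flags: to beat $\Omega(k)$ one must use instances that are non-Boolean on only an $o(1/k)$ fraction of inputs.

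The second gap is your mechanism for the $\log k$ factor. A union bound or entropy count over the $2^{\Theta(n\log k)}$ Boolean candidates argues in the wrong direction: for a one-sided lower bound you must \emph{exhibit} one consistent Boolean candidate, not rule many out, and a large number of candidates does not by itself force many queries---a single queried non-Boolean value excludes all candidates at once. The paper instead gets the logarithm from geometry. Its hard instance is $f=\mathbb{1}_{V_1}+\mathbb{1}_{V_2}$ for random affine subspaces of dimension $n/2$ (with $k\approx 3\cdot 2^{n/2}$, so $2^n\approx k^2$) meeting in exactly one point, hence non-Boolean at a single point. The fooling Boolean function is constructed explicitly: letting $W_i$ be the affine span of the queried points inside $V_i$, if $W_1\cap W_2=\emptyset$ one can shrink $V_2$ to an affine subspace $V_2'$ of dimension $n/2-1$ with $W_2\subseteq V_2'$ and $V_1\cap V_2'=\emptyset$, and then $g=\mathbb{1}_{V_1}+\mathbb{1}_{V_2'}$ is Boolean, $3\cdot 2^{n/2}$-sparse, and agrees with $f$ on all queries. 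The $\log k$ appears because with $q=o(n\cdot 2^{n/2})$ queries only $o(n)$ of them land in $V_1$, so $\dim W_1=o(n)$, and the intersection point (uniform over the $2^{n/2}$ points of $V_1$) lies in $W_1$ with probability $2^{-\Omega(n)}$; forcing the spans to meet requires $\Omega(n)$ queries inside a set of density $2^{-n/2}$, i.e., $q=\Omega(n\cdot 2^{n/2})=\Omega(k\log k)$. Without the two-subspace structure and this affine-span argument, your plan has no route to the logarithmic factor.
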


%%%%%%%%%%%%%%%%%%%%%%%%%%%%%%%%%%%%%%%%%%%%%%%%%%%%%%%%%%%%%%%%%
\subsection{Overview of Proofs}

\subsubsection{The List-Decoding Size of Fourier-Sparse Boolean Functions}

In order to prove Theorem~\ref{thm:IntroList}, we have to bound from above the number of $k$-Fourier-sparse Boolean functions of distance at most $d$ from a general function $f: \Fset_2^n \rightarrow \R$. In the discussion below, let us consider the special case where $f$ is the constant zero function. The general result follows easily.

Here, we have to bound the number of $k$-Fourier-sparse Boolean
functions $g: \Fset_2^n \rightarrow \{0,1\}$ of support size at most $d$.
We start by observing using Parseval's theorem that such functions
have small spectral norm $\|\hat{g}\|_1 = \sum_{S \subseteq [n]}{|\hat{g}(S)|}$.
Next, we observe that the Fourier expansion of the normalized function $g / \|\hat{g}\|_1$ is
a convex combination of functions $\pm \chi_S$, and thus can be viewed,
following a technique of Bruck and Smolensky~\cite{BruckS92}, as an expectation over a distribution on the $S$'s.
Using the Chernoff-Hoeffding bound and the bound on the spectral norm, we obtain a succinct
representation for every such function $g$. The ability to represent these functions by
a binary string of bounded length yields the upper bound on their number. We note that the proof approach
somewhat resembles that of the upper bound on the list-decoding
size of Reed-Muller codes due to Kaufman, Lovett, and Porat~\cite{KaufmanLP12}.

\subsubsection{Learning Fourier-Sparse Boolean Functions}

As a warmup, let us mention
an easy upper bound of $O(n \cdot k^2)$. This follows by
recalling that there are at most $2^{O(nk)}$ $k$-Fourier-sparse Boolean functions,
and that each one differs from any fixed function on at least $1/k$ fraction of the inputs.
Hence by the union bound, after $O(n \cdot k^2)$ samples all other functions will
be eliminated.

The improved bound in Corollary~\ref{cor:IntroSample} follows similarly
using the list-decoding result of Theorem~\ref{thm:IntroList}.
Namely, we apply the union bound separately on functions of different
distances from the input function. Functions that are nearby are harder to
``hit'' using random samples, but by the theorem, there are few of them; functions
that are further away are in abundance, but they are easier to ``hit'' using random samples.

\subsubsection{Testing Booleanity of Fourier-Sparse Functions}\label{subsec:Bool}

The testing Booleanity problem is somewhat different from typical property
testing problems. Indeed, in property testing one usually has to distinguish
objects that satisfy a certain property from those that are $\eps$-far from
the property for some distance parameter $\eps >0$. However, here the
tester is required to decide if the function satisfies the Booleanity
property or not, with no distance parameter involved. This unusual
setting makes sense in this case because Fourier-sparse non-Boolean
functions are always quite far from every Boolean function. More
precisely, the authors of~\cite{GurT13} used the uncertainty
principle (see Proposition~\ref{prop:uncertain}) to prove that
every $k$-Fourier-sparse non-Boolean function
$f: \Fset_2^n \rightarrow \R$ is non-Boolean on
at least $\Omega(2^n/k^2)$ inputs
(see Claim~\ref{claim:non-zero}).
This immediately implies a (non-adaptive one-sided error) tester that uses $O(k^2)$
queries: just check that $f$ is Boolean on $O(k^2)$ uniform inputs in $\Fset_2^n$.

The analysis of~\cite{GurT13} turns out to be tight, as there are $k$-Fourier-sparse non-Boolean functions that are not Boolean at
only $\Theta(2^n/k^2)$ points. Indeed, for an even integer $n$, consider
the function $f:\Fset_2^n \to \{0,1,2\}$ defined by
\begin{align}\label{no_func}
f(x_1,\ldots,x_n) = \myand(x_1,\ldots,x_{n/2})+\myand(x_{n/2+1},\ldots,x_n),
\end{align}
which is not Boolean at only one point and has Fourier-sparsity $2 \cdot 2^{n/2}$ (see Claim~\ref{claim:affineSparse}).

\paragraph{Upper bound.}
We prove Theorem~\ref{thm:BoolUpperIntro} using our learning result, Corollary~\ref{cor:IntroSample}.
To do so, we first observe that a restriction of a $k$-Fourier-sparse non-Boolean function to a random subspace of dimension $O(\log k)$ is non-Boolean with high probability (see Lemma~\ref{lemma:restriction}). Since a restriction to a subspace does not increase the Fourier-sparsity, this reduces our problem to testing Booleanity of $k$-Fourier-sparse functions on $n = O(\log k)$ variables. Then, after $O(k \cdot \log^2 k)$ samples from the subspace, if a non-Boolean value was found then we are clearly done. Otherwise, by Corollary~\ref{cor:IntroSample}, the samples uniquely specify a Boolean candidate for the restricted function. Such a function must be quite far from every other $k$-Fourier-sparse function (Boolean or not; see Claim~\ref{claim:sparse_dist}). This enables us to decide if the restricted function equals the Boolean candidate function or not.

\paragraph{Lower bound.}
The upper bound in Theorem~\ref{thm:BoolUpperIntro} gets close to the $\Omega(k)$ lower bound proven by Gur and Tamuz in~\cite{GurT13}. For their lower bound, they considered the following two distributions: (a) the uniform distribution over all Boolean $n$-variable functions that depend only on their first $\log_2 k$ variables; (b) the uniform distribution over all $n$-variable functions that depend only on their first $\log_2 k$ variables and return a Boolean value on $k-1$ of the assignments to the relevant variables and the value $2$ otherwise. It can be easily seen that any (possibly adaptive) tester that distinguishes with some constant probability between distributions (a) and (b) has query complexity $\Omega(k)$. Since the first distribution is supported on $k$-Fourier-sparse Boolean functions and the second on $k$-Fourier-sparse non-Boolean functions, this implies that the same lower bound holds for the query complexity of testing Booleanity of $k$-Fourier-sparse functions.

Note that the distributions considered above are supported on
$\log_2 k$-Fourier-dimensional functions.
It can be seen (say, using the uncertainty principle) that such functions are
not Boolean on at least $1/k$ fraction of their inputs, so $O(k)$
random samples suffice for finding a non-Boolean value if exists.
Hence, in order to get beyond the
$\Omega(k)$ lower bound, we need to consider $k$-Fourier-sparse
functions that are not Boolean at only $o(1/k)$ fraction of the inputs --
our functions will actually have $O(1/k^2)$ fraction of such inputs.

Specifically, we consider the distribution of functions obtained by composing the function $f$ given in~(\ref{no_func}) with a random invertible affine transformation. This is the class of functions that can be represented as a sum $\mathbb{1}_{V_1} + \mathbb{1}_{V_2}$ of two indicators of affine subspaces $V_1,V_2 \subseteq \Fset_2^n$ of dimension $n/2$, which intersect at exactly one point. Intuitively, it seems that distinguishing the functions in this class from those where $V_1$ and $V_2$ have empty intersection requires the tester to learn the affine subspaces $V_1$ and $V_2$, a task that requires $\Omega(n \cdot 2^{n/2})$ queries. We prove such a lower bound for non-adaptive one-sided error testers. Since the above functions are $k$-Fourier-sparse for $k=O(2^{n/2})$, the obtained lower bound is $\Omega(k \cdot \log k)$.

\section{Preliminaries}

Let $[n]$ denote the set $\{1,\ldots,n\}$. A function $f: \Fset_2^n \rightarrow \R$ is {\em Boolean} if its image is contained in $\{0,1\}$ and is {\em non-Boolean} otherwise. The {\em distance} between two functions $f,g:\Fset_2^n \rightarrow \R$, denoted $\dist(f,g)$, is the number of vectors $x \in \Fset_2^n$ for which $f(x) \neq g(x)$.

\subsection*{Fourier Expansion}
For every $S \subseteq [n]$, let $\chi_S : \Fset_2^n \rightarrow \{-1,1\}$ denote the function defined by $\chi_S(x) = (-1)^{\sum_{i \in S}{x_i}}$. It is well known that the $2^n$ functions $\{\chi_S\}_{S \subseteq [n]}$ form an orthonormal basis of the space of functions $\Fset_2^n \rightarrow \R$ with respect to the inner product $\langle f,g \rangle = \expectation_{x}[f(x) \cdot g(x)]$, where $x$ is distributed uniformly over $\Fset_2^n$. Thus, every function $f: \Fset_2^n \rightarrow \R$ can be uniquely represented as a linear combination $f = \sum_{S \subseteq [n]}{\hat{f}(S) \cdot \chi_S}$ of this basis. This representation is called the {\em Fourier expansion} of $f$, and the numbers $\hat{f}(S)$ are referred to as its {\em Fourier coefficients}. The support of $f$ is defined by $\supp(f) = \{ x \in \Fset_2^n \mid f(x) \neq 0\}$ and the support of $\hat{f}$, known as the {\em Fourier spectrum} of $f$, by $\supp(\hat{f}) = \{ S \subseteq [n] \mid \hat{f}(S) \neq 0\}$. We say that $f$ is {\em $k$-Fourier-sparse}\footnote{Boolean functions are sometimes defined in the literature with range $\{-1,+1\}$ rather than $\{0,1\}$. Notice that this affects the Fourier-sparsity by at most $1$.} if $|\supp(\hat{f})| \leq k$. For every $p \geq 1$ we denote $\|\hat{f}\|_p = (\sum_{S \subseteq [n]}{|\hat{f}(S)|^p})^{1/p}$. For $p=1$, $\|\hat{f}\|_1$ is known as the {\em spectral norm} of $f$. {\em Parseval's theorem} states that $\expectation_{x}[f(x)^2] = \|\hat{f}\|_2^2$.

The uncertainty principle says that there is no nonzero function $f$ for which the supports of both $f$ and $\hat{f}$ are small (see, e.g.,~\cite[Exercise~3.15]{O'Book14}). We state it below with two simple consequences.

\begin{proposition}[The Uncertainty Principle]\label{prop:uncertain}
For every nonzero function $f: \Fset_2^n \rightarrow \R$,
$$|\supp(f)| \cdot |\supp(\hat{f})| \geq 2^n.$$
\end{proposition}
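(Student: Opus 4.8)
The plan is to prove $st \geq 2^n$, where $s = |\supp(f)|$ and $t = |\supp(\hat f)|$, by chaining together a few norm comparisons, all anchored by Parseval's theorem. The key quantity to track is $\|\hat f\|_\infty = \max_S |\hat f(S)|$, which I will squeeze both from above (in terms of $s$) and from below (in terms of $t$).

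First I would bound $\|\hat f\|_\infty$ from above. Since $|\chi_S(x)| = 1$ everywhere, each coefficient satisfies $|\hat f(S)| = |\expectation_x[f(x)\chi_S(x)]| \leq \expectation_x[|f(x)|]$, so $\|\hat f\|_\infty \leq \expectation_x[|f(x)|]$. The crucial step is to relate $\expectation_x[|f(x)|]$ to $\|\hat f\|_2$ using the support size of $f$ itself: writing $|f(x)| = |f(x)| \cdot \mathbb{1}[f(x) \neq 0]$ and applying Cauchy--Schwarz gives $\expectation_x[|f(x)|] \leq \sqrt{\expectation_x[f(x)^2]} \cdot \sqrt{\expectation_x[\mathbb{1}[f(x) \neq 0]]} = \|\hat f\|_2 \cdot \sqrt{s/2^n}$, where I used Parseval for the first factor and $\expectation_x[\mathbb{1}[f(x) \neq 0]] = s/2^n$ for the second. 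Combining yields $\|\hat f\|_\infty \leq \sqrt{s/2^n}\,\|\hat f\|_2$.

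Second, the lower bound on $\|\hat f\|_\infty$ in terms of $t$ is immediate: since $\hat f$ is supported on at most $t$ sets, $\|\hat f\|_2^2 = \sum_S \hat f(S)^2 \leq t \cdot \|\hat f\|_\infty^2$, i.e.\ $\|\hat f\|_2 \leq \sqrt{t}\,\|\hat f\|_\infty$. Chaining the two bounds gives $\|\hat f\|_2 \leq \sqrt{t}\,\|\hat f\|_\infty \leq \sqrt{st/2^n}\,\|\hat f\|_2$. Because $f$ is nonzero, Parseval forces $\|\hat f\|_2 > 0$, so I can cancel it from both sides to obtain $1 \leq \sqrt{st/2^n}$, which upon squaring is exactly $st \geq 2^n$.

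I do not expect any genuine obstacle here: the whole argument is an $\ell_\infty$--$\ell_2$--$\ell_1$ norm comparison, and the only place where real information enters is the Cauchy--Schwarz step that converts $\expectation_x[|f(x)|]$ into $\|\hat f\|_2 \sqrt{s/2^n}$, which is where the support size $s$ is introduced. The one point to be careful about is invoking the nonvanishing of $\|\hat f\|_2$ (to justify the cancellation), which is guaranteed precisely by the hypothesis $f \neq 0$ together with Parseval.
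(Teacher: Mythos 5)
Your proof is correct: the chain $\|\hat f\|_\infty \leq \expectation_x[|f(x)|] \leq \sqrt{s/2^n}\,\|\hat f\|_2 \leq \sqrt{st/2^n}\,\|\hat f\|_\infty$ (with the middle step by Cauchy--Schwarz and Parseval) is exactly the standard argument, and the cancellation is properly justified by $f \neq 0$. The paper itself offers no proof of this proposition --- it defers to the cited exercise in O'Donnell's book --- and your argument is precisely the one that reference intends, so there is nothing to compare beyond noting that you have filled in the omitted details correctly.
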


\begin{claim}\label{claim:sparse_dist}
For every two distinct $k$-Fourier-sparse functions $f,g:\Fset_2^n \rightarrow \R$, $\dist(f,g) \geq 2^n/(2k)$.
\end{claim}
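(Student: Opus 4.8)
The plan is to reduce the claim to a single application of the uncertainty principle, Proposition~\ref{prop:uncertain}. The key observation is that distance becomes support size under subtraction: for any two functions, $\dist(f,g)$ counts exactly the inputs $x$ at which $f(x) - g(x) \neq 0$, so if we set $h = f - g$ then $\dist(f,g) = |\supp(h)|$. Thus it suffices to lower bound the support of $h$.

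First I would record that $h$ is a nonzero function, since $f$ and $g$ are assumed distinct. Next I would bound the Fourier sparsity of $h$. By linearity of the Fourier expansion, $\hat{h}(S) = \hat{f}(S) - \hat{g}(S)$ for every $S \subseteq [n]$, so $\hat{h}(S)$ can be nonzero only when at least one of $\hat{f}(S)$ or $\hat{g}(S)$ is nonzero. Hence $\supp(\hat{h}) \subseteq \supp(\hat{f}) \cup \supp(\hat{g})$, and since each of $f$ and $g$ is $k$-Fourier-sparse, we get $|\supp(\hat{h})| \leq |\supp(\hat{f})| + |\supp(\hat{g})| \leq 2k$. (Note the bound may be strict due to cancellations, but this only helps.)

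Finally I would invoke the uncertainty principle. Since $h$ is nonzero, Proposition~\ref{prop:uncertain} gives $|\supp(h)| \cdot |\supp(\hat{h})| \geq 2^n$, and combining this with $|\supp(\hat{h})| \leq 2k$ yields
\[
\dist(f,g) = |\supp(h)| \geq \frac{2^n}{|\supp(\hat{h})|} \geq \frac{2^n}{2k},
\]
which is exactly the desired bound.

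There is no serious obstacle here: the entire argument is a direct chain, and the only point requiring a moment's care is the reduction from the two functions to the single difference $h$, together with the elementary fact that subtraction at most doubles the Fourier sparsity. The heavy lifting is done entirely by the uncertainty principle, which is quoted rather than reproved.
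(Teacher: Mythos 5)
Your proof is correct and is exactly the paper's argument, just written out in full: the paper's one-line proof also applies Proposition~\ref{prop:uncertain} to $f-g$, whose Fourier-sparsity is at most $2k$. No differences to note.
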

\begin{proof}
Apply Proposition~\ref{prop:uncertain} to the function $f-g$, whose Fourier-sparsity is at most $2k$.
\end{proof}

\begin{claim}[\cite{GurT13}]\label{claim:non-zero}
For every $k$-Fourier-sparse function $f:\Fset_2^n \rightarrow \R$, if $f$ is non-Boolean then
$$|\{ x \in \Fset_2^n \mid f(x) \notin \{0,1\}\}| \geq \frac{2}{k^2+k+2} \cdot 2^n.$$
\end{claim}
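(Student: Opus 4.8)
The plan is to reduce the claim to a single application of the uncertainty principle (Proposition~\ref{prop:uncertain}) to an auxiliary function whose support is precisely the set of non-Boolean points of $f$. Concretely, I would set $g = f \cdot (f-1)$ and observe that for every $x$ we have $g(x) = 0$ if and only if $f(x) \in \{0,1\}$; hence $\supp(g) = \{x \in \Fset_2^n \mid f(x) \notin \{0,1\}\}$. Since $f$ is non-Boolean this set is nonempty, so $g$ is a nonzero function and Proposition~\ref{prop:uncertain} applies to it.

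The heart of the argument is to bound the Fourier-sparsity of $g$ from above by $(k^2+k+2)/2$. Write $A = \supp(\hat{f})$, so $|A| \le k$. Since $\chi_S \cdot \chi_T = \chi_{S \triangle T}$, the product $f^2$ is supported on the sumset $A \triangle A = \{S \triangle T \mid S,T \in A\}$. The size of $A \triangle A$ is at most $\binom{k}{2}+1$: the diagonal pairs all collapse to $\emptyset = S \triangle S$, contributing a single set, while the $\binom{k}{2}$ unordered off-diagonal pairs contribute at most $\binom{k}{2}$ further sets. Thus $f^2$ is $\tfrac{k^2-k+2}{2}$-Fourier-sparse. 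Since $\hat{g} = \widehat{f^2} - \hat{f}$, its spectrum is contained in $(A \triangle A) \cup A$, so
\[
|\supp(\hat{g})| \;\le\; \frac{k^2-k+2}{2} + k \;=\; \frac{k^2+k+2}{2}.
\]

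Combining the two steps, Proposition~\ref{prop:uncertain} applied to the nonzero function $g$ yields $|\supp(g)| \ge 2^n / |\supp(\hat{g})| \ge \frac{2}{k^2+k+2}\cdot 2^n$, which is exactly the desired bound because $\supp(g)$ is the set of non-Boolean inputs. The only step requiring care is the sparsity count: I expect the main subtlety to be verifying that $|A \triangle A| \le \binom{k}{2}+1$ rather than the looser $k^2$, since only the sharper count (exploiting the symmetry $S \triangle T = T \triangle S$ and the collapse of the diagonal to $\emptyset$) produces the stated constant $\frac{2}{k^2+k+2}$; a cruder bound would weaken the denominator.
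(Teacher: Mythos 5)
Your proposal is correct and is essentially identical to the paper's proof: both apply the uncertainty principle (Proposition~\ref{prop:uncertain}) to the auxiliary function $f\cdot(f-1)$ and bound its Fourier-sparsity by $\binom{k}{2}+k+1 = \frac{k^2+k+2}{2}$ using the symmetric-difference structure of the spectrum, with the diagonal pairs collapsing to $\emptyset$. The only difference is that you spell out the counting argument in more detail than the paper's one-line proof does.
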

\begin{proof}
Apply Proposition~\ref{prop:uncertain} to the function $f \cdot(f-1)$, whose Fourier-sparsity is at most
\[  |\{ S \triangle T \mid S,T \in \supp(\hat{f})\} | + |\supp(\hat{f})| \leq \binom{k}{2} + k +1,\]
where $\triangle$ stands for symmetric difference of sets.
\end{proof}

We also need the following simple claim.
\begin{claim}\label{claim:affineSparse}
For every affine subspace $V \subseteq \Fset_2^n$ of co-dimension $k$, the indicator function $\mathbb{1}_V: \Fset_2^n \rightarrow \{0,1\}$ is $2^k$-Fourier-sparse.
\end{claim}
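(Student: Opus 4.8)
The plan is to write $\mathbb{1}_V$ explicitly as a short linear combination of characters, by expressing $V$ through $k$ affine-linear constraints over $\Fset_2$. First I would recall that an affine subspace $V \subseteq \Fset_2^n$ of co-dimension $k$ is precisely the solution set of a consistent system of $k$ linearly independent affine-linear equations: there exist sets $S_1,\ldots,S_k \subseteq [n]$ whose indicator vectors are linearly independent over $\Fset_2$, together with bits $b_1,\ldots,b_k \in \Fset_2$, such that
$$
V = \Big\{ x \in \Fset_2^n : \textstyle\sum_{j \in S_i} x_j \equiv b_i \ (\mathrm{mod}\ 2) \text{ for all } i \in [k] \Big\}.
$$
Equivalently, $x \in V$ if and only if $\chi_{S_i}(x) = (-1)^{b_i}$ for every $i \in [k]$, using that $\chi_{S_i}(x) = (-1)^{\sum_{j \in S_i} x_j}$.

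Next I would observe that the indicator of each individual constraint has a two-term Fourier expansion, namely
$$
\mathbb{1}\big[\chi_{S_i}(x) = (-1)^{b_i}\big] = \tfrac{1}{2}\big(1 + (-1)^{b_i}\chi_{S_i}(x)\big),
$$
which is immediate by checking the two possible values $\pm 1$ of $\chi_{S_i}(x)$. Since membership in $V$ requires all $k$ constraints to hold simultaneously, the indicator factors as the product
$$
\mathbb{1}_V(x) = \prod_{i=1}^k \tfrac{1}{2}\big(1 + (-1)^{b_i}\chi_{S_i}(x)\big).
$$

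Finally I would expand this product over the $2^k$ subsets $T \subseteq [k]$, using the multiplicativity identity $\chi_S \cdot \chi_{S'} = \chi_{S \triangle S'}$ to collapse each monomial $\prod_{i \in T}\chi_{S_i}$ into a single character $\chi_{\triangle_{i \in T} S_i}$. The resulting expansion has at most $2^k$ terms, each a scalar multiple of a character, so $\mathbb{1}_V$ has at most $2^k$ nonzero Fourier coefficients, which proves the claim. The only point requiring a little care is the first step, i.e., justifying that co-dimension $k$ corresponds to exactly $k$ independent constraints and that these can be put in the character form $\chi_{S_i}(x) = (-1)^{b_i}$; but this is just the standard correspondence between linear functionals on $\Fset_2^n$ and the characters $\chi_S$. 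Once that is in place, cancellation can only decrease the count, so the bound $2^k$ follows at once (and is in fact attained, since the linear independence of the $S_i$ makes all $2^k$ resulting characters distinct).
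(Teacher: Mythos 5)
Your proof is correct and is essentially identical to the paper's: both express $V$ as the solution set of $k$ affine constraints $\langle x, a_i\rangle = b_i$, write $\mathbb{1}_V(x) = \prod_{i=1}^{k}\frac{1}{2}\bigl(1+(-1)^{b_i}\chi_{S_i}(x)\bigr)$, and conclude sparsity at most $2^k$ by expanding the product. Your additional details (the multiplicativity $\chi_S\cdot\chi_{S'}=\chi_{S\triangle S'}$ and the observation that linear independence makes the bound tight) are correct but not needed beyond what the paper states.
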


\begin{proof}
Since $V$ has co-dimension $k$, there exist $a_1,\ldots,a_k \in \Fset_2^n$ and $b_1,\ldots,b_k \in \Fset_2$ such that $V = \{ x \in \Fset_2^n \mid \langle x, a_i \rangle = b_i,~i=1,\ldots,k \}$. For every $i$, let $S_i \subseteq [n]$ denote the set whose characteristic vector is $a_i$, and observe that for every $x \in \Fset_2^n$,
$$\mathbb{1}_V (x) = \prod_{i=1}^{k}{\Big (\frac{1+(-1)^{b_i} \cdot \chi_{S_i}(x)}{2} \Big)}.$$
This representation implies that $\mathbb{1}_V$ is $2^k$-Fourier-sparse.
\end{proof}

\subsection*{Chernoff-Hoeffding Bound}
\begin{theorem}\label{thm:Chernoff}
Let $X_1,\ldots,X_N$ be $N$ identically distributed independent random variables in $[-a,+a]$ satisfying $\expectation[X_i]=\mu$ for all $i$. Then for every $\delta \leq 1/2$ and $N \geq C \cdot a^2 \cdot \log(1/\delta)/\eps^2$, for a universal constant $C$, it holds that
$$\Prob{}{\Big|\mu - \frac{1}{N} \cdot \sum_{i=1}^{N}{X_i} \Big| < \eps} \geq 1- \delta.$$
\end{theorem}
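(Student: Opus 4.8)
The plan is to prove this standard tail bound by the exponential-moment (Chernoff) method, which reduces the problem to controlling the moment generating function of each centered summand. First I would center the variables by setting $Y_i = X_i - \mu$, so that $\expectation[Y_i]=0$ and each $Y_i$ is supported in an interval of length $2a$. For the upper tail I would fix a parameter $t>0$ and apply Markov's inequality to the nonnegative random variable $\exp(t\sum_i Y_i)$, obtaining
$$\Prob{}{\frac{1}{N}\sum_{i=1}^{N} Y_i \geq \eps} \leq e^{-tN\eps} \cdot \expectation\Big[ e^{t\sum_{i} Y_i}\Big] = e^{-tN\eps} \cdot \prod_{i=1}^{N} \expectation\big[ e^{t Y_i}\big],$$
where the factorization into individual moment generating functions uses the independence of the $Y_i$.

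The main technical step, and the only delicate point in the argument, is Hoeffding's lemma, which states that a mean-zero random variable $Y$ supported in an interval of length $2a$ satisfies $\expectation[e^{tY}] \leq e^{t^2 a^2/2}$. I would prove this by studying the cumulant generating function $\psi(t)=\log \expectation[e^{tY}]$, noting $\psi(0)=0$ and $\psi'(0)=\expectation[Y]=0$, and then showing that the second derivative equals the variance of $Y$ under the tilted probability measure $d\nu \propto e^{tY}\, d\mu$. Since $Y$ remains supported in an interval of length $2a$ under this tilted measure, Popoviciu's inequality bounds that variance by $(2a)^2/4 = a^2$, so $\psi''(t)\leq a^2$ for all $t$. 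A second-order Taylor expansion of $\psi$ about $0$ then yields $\psi(t)\leq t^2 a^2/2$, which is exactly the claimed bound.

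Combining the two steps, the upper tail is at most $\exp(-tN\eps + Nt^2 a^2/2)$; minimizing the exponent over $t$ at $t=\eps/a^2$ gives the Gaussian-type bound $\exp(-N\eps^2/(2a^2))$. Applying the identical argument to $-Y_i$ controls the lower tail, and a union bound over the two tails gives
$$\Prob{}{\Big| \frac{1}{N}\sum_{i=1}^{N} X_i - \mu \Big| \geq \eps} \leq 2\exp\!\Big( -\frac{N\eps^2}{2a^2}\Big).$$
It remains to verify the stated sample complexity: the right-hand side is at most $\delta$ precisely when $N \geq (2a^2/\eps^2)\log(2/\delta)$, and since $\delta \leq 1/2$ forces $\log(1/\delta)\geq \log 2$ and hence $\log(2/\delta)\leq 2\log(1/\delta)$, the hypothesis $N \geq C\cdot a^2 \log(1/\delta)/\eps^2$ with $C=4$ suffices. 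The complementary event then has probability at least $1-\delta$, which is the assertion of the theorem.
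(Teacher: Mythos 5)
Your proof is correct and complete: the centering, the Markov bound on the moment generating function, Hoeffding's lemma via the tilted-measure variance and Popoviciu's inequality, the optimization at $t=\eps/a^2$, the two-sided union bound, and the final arithmetic showing $C=4$ suffices (using $\log(2/\delta)\leq 2\log(1/\delta)$ for $\delta\leq 1/2$) are all valid. Note that the paper itself states this Chernoff--Hoeffding bound as a standard preliminary \emph{without proof}, so there is no paper argument to compare against; your derivation is the standard one and fully justifies the statement as used in the paper.
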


\section{The List-Decoding Size of Fourier-Sparse Boolean Functions}

We turn to prove Theorem~\ref{thm:IntroList}, which provides an upper bound on the list-decoding size of the code of block length $2^n$ of all $k$-Fourier-sparse Boolean functions on $n$ variables. Equivalently, for a general distance $d$ and a function $f : \Fset_2^n \rightarrow \R$ we bound the number of $k$-Fourier-sparse Boolean functions on $n$ variables of distance at most $d$ from $f$.

We start by proving that a function $f : \Fset_2^n \rightarrow \R$ with small spectral norm can be well approximated by a linear combination of few functions from $\{ \chi_S \}_{S \subseteq [n]}$ with coefficients of equal magnitude. This was essentially proved in~\cite{BruckS92} and we include here the proof for completeness.
\begin{lemma}\label{lemma:f_rep}
For every function $f : \Fset_2^n \rightarrow \R$, $\eps >0$, and $\delta \in (0,1/2]$,
there exists a collection\footnote{Repetitions of subsets in the collection $\calF$ are allowed.}
$\calF$ of $O(\|\hat{f}\|_1^2 \cdot \log(1/\delta) /\eps^2)$ subsets of $[n]$
with signs $(a_S \in \{\pm 1\})_{S \in \calF}$ such that for all but at most
$\delta$ fraction of $x \in \Fset_2^n$ it holds that
\[
\Big|f(x) - \frac{\|\hat{f}\|_1}{|\calF|} \cdot \sum_{S \in \calF}{a_S \cdot \chi_S(x)}\Big| < \eps \;.
\]
\end{lemma}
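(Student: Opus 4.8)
The plan is to express the normalized function $f/\|\hat{f}\|_1$ as an expectation over a probability distribution on signed characters, and then to invoke the Chernoff--Hoeffding bound (Theorem~\ref{thm:Chernoff}) to approximate this expectation by a finite average of independent samples. First I would write the Fourier expansion $f = \sum_{S}{\hat{f}(S) \cdot \chi_S}$ and define a probability distribution $\mathcal{D}$ on subsets $S \subseteq [n]$ by setting $\Pr_{\mathcal{D}}[S] = |\hat{f}(S)|/\|\hat{f}\|_1$; this is a valid distribution precisely because the probabilities sum to $1$ by the definition of the spectral norm. Associating to each $S$ the sign $a_S = \sign(\hat{f}(S)) \in \{\pm 1\}$, we then have, for every fixed $x$,
\[
f(x) = \sum_{S}{\|\hat{f}\|_1 \cdot \frac{|\hat{f}(S)|}{\|\hat{f}\|_1} \cdot a_S \cdot \chi_S(x)} = \|\hat{f}\|_1 \cdot \expectation_{S \sim \mathcal{D}}[a_S \cdot \chi_S(x)].
\]

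Next I would draw $N = O(\|\hat{f}\|_1^2 \cdot \log(1/\delta)/\eps^2)$ subsets $S_1,\ldots,S_N$ independently from $\mathcal{D}$, forming the collection $\calF$ together with their signs. The idea is that for any fixed $x$, the random variables $X_i = \|\hat{f}\|_1 \cdot a_{S_i} \cdot \chi_{S_i}(x)$ are i.i.d., bounded in $[-\|\hat{f}\|_1, +\|\hat{f}\|_1]$ (since $\chi_{S}(x) \in \{-1,+1\}$), and have expectation $f(x)$ by the identity above. Applying Theorem~\ref{thm:Chernoff} with $a = \|\hat{f}\|_1$ and the stated sample size $N$ guarantees that their empirical average $\frac{\|\hat{f}\|_1}{N}\sum_{i}{a_{S_i}\chi_{S_i}(x)}$ is within $\eps$ of $f(x)$ with probability at least $1-\delta$ over the choice of $\calF$.

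The one subtlety here is a quantifier swap: the Chernoff bound gives, for each fixed $x$, a set $\calF$ that is good with high probability, whereas the statement requires a single $\calF$ that is simultaneously good for a $(1-\delta)$ fraction of all $x$. I would resolve this by a standard averaging argument: let $B(\calF)$ denote the fraction of inputs $x$ on which the approximation fails. By Fubini (swapping the expectation over $\calF$ and the uniform average over $x$), the expected value of $B(\calF)$ over the random choice of $\calF$ is at most $\delta$, since for each individual $x$ the failure probability is at most $\delta$. Hence there exists a particular choice of $\calF$ for which at most a $\delta$ fraction of inputs are bad, which is exactly the desired conclusion. This averaging step, rather than any hard estimate, is really the crux of the argument; the probabilistic approximation itself is a direct application of the already-stated Chernoff--Hoeffding bound.
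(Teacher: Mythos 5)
Your proposal is correct and follows exactly the paper's own argument: the same distribution $D(S)=|\hat{f}(S)|/\|\hat{f}\|_1$ with signs $a_S = \sign(\hat{f}(S))$, the same application of Theorem~\ref{thm:Chernoff} pointwise in $x$, and the same final averaging step (which the paper phrases as ``by linearity of expectation'') to convert per-point success probability $1-\delta$ into a single collection $\calF$ good for a $1-\delta$ fraction of inputs. There is nothing to add or fix.
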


\begin{proof}
Observe that the function $f$ can be represented as follows.
\[
f = \sum_{S \subseteq [n]}{\hat{f}(S)\cdot \chi_S} = \sum_{S \subseteq [n]}{\frac{|\hat{f}(S)|}{\|\hat{f}\|_1} \cdot \|\hat{f}\|_1 \cdot \sign(\hat{f}(S)) \cdot \chi_S} = \expectation_{S \sim D}[\|\hat{f}\|_1 \cdot \sign(\hat{f}(S)) \cdot \chi_S],
\]
where $D$ is the distribution defined by $D(S) = |\hat{f}(S)|/\|\hat{f}\|_1$.
Let $\calF$ be a collection of
$|\calF| = O(\|\hat{f}\|_1^2 \cdot \log(1/\delta)/\eps^2)$
independent random samples from the distribution $D$. For every $x \in \Fset_2^n$,
the Chernoff-Hoeffding bound (Theorem~\ref{thm:Chernoff}) implies that
with probability at least $1-\delta$ it holds that
\begin{eqnarray}\label{Eq:g_approx}
\Big|f(x) - \frac{1}{|\calF|} \cdot \sum_{S \in \calF}{\|\hat{f}\|_1 \cdot a_S \cdot \chi_S(x)} \Big| < \eps,
\end{eqnarray}
where $a_S = \sign(\hat{f}(S))$. By linearity of expectation, it follows that there exist $\calF$ and signs $(a_S)_{S \in \calF}$ for which~\eqref{Eq:g_approx} holds for all but at most $\delta$ fraction of $x\in \Fset_2^n$, as required.
\end{proof}

We now apply Lemma~\ref{lemma:f_rep} to Fourier-sparse functions in $\Fset_2^n \rightarrow \{-1,0,+1\}$ with bounded support size, and then, in Corollary~\ref{cor:weight}, derive an upper bound on the number of these functions.

\begin{corollary}\label{cor:f_rep}
Let $f : \Fset_2^n \rightarrow \{-1,0,+1\}$ be a $k$-Fourier-sparse function satisfying $|\supp(f)| \leq d$. Then for every $\delta \in (0,1/2]$ there exists a collection $\calF$ of $O(dk \log(1/\delta)/2^n)$ subsets of $[n]$ with signs $(a_S \in \{\pm 1\})_{S \in \calF}$ such that for all but at most $\delta$ fraction of $x \in \Fset_2^n$ it holds that
\[
\Big|f(x) - \frac{\|\hat{f}\|_1}{|\calF|} \cdot \sum_{S \in \calF}{a_S \cdot \chi_S(x)}\Big| < \frac{1}{2} \; .
\]
\end{corollary}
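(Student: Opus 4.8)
The plan is to derive Corollary~\ref{cor:f_rep} from Lemma~\ref{lemma:f_rep} by simply instantiating the lemma with a suitable choice of the approximation parameter $\eps$ and by estimating the spectral norm $\|\hat{f}\|_1$ in terms of the support size $d$. The only real content beyond plugging in is the spectral norm bound, so I would isolate that first.

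\medskip

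First I would bound $\|\hat{f}\|_1$. Since $f$ takes values in $\{-1,0,+1\}$ and $|\supp(f)| \leq d$, Parseval's theorem gives
\[
\|\hat{f}\|_2^2 = \expectation_x[f(x)^2] = \frac{|\supp(f)|}{2^n} \leq \frac{d}{2^n},
\]
where I used that $f(x)^2 \in \{0,1\}$ and equals $1$ exactly on $\supp(f)$. Next, because $f$ is $k$-Fourier-sparse, the Fourier expansion of $f$ has at most $k$ nonzero coefficients, so by Cauchy--Schwarz applied to the at-most-$k$ nonzero terms,
\[
\|\hat{f}\|_1 = \sum_{S \in \supp(\hat{f})}{|\hat{f}(S)|} \leq \sqrt{k} \cdot \|\hat{f}\|_2 \leq \sqrt{\frac{dk}{2^n}}.
\]
Thus $\|\hat{f}\|_1^2 \leq dk/2^n$, which is the key quantitative input.

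\medskip

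Then I would apply Lemma~\ref{lemma:f_rep} with the fixed choice $\eps = 1/2$. The lemma produces a collection $\calF$ of size $O(\|\hat{f}\|_1^2 \cdot \log(1/\delta)/\eps^2)$ together with signs $(a_S)_{S \in \calF}$ such that for all but at most a $\delta$ fraction of $x \in \Fset_2^n$,
\[
\Big|f(x) - \frac{\|\hat{f}\|_1}{|\calF|} \cdot \sum_{S \in \calF}{a_S \cdot \chi_S(x)}\Big| < \eps = \frac{1}{2},
\]
which is exactly the conclusion we want. It only remains to observe that the cardinality bound from the lemma becomes the claimed one: substituting $\eps = 1/2$ and the spectral norm estimate gives
\[
|\calF| = O\!\left(\|\hat{f}\|_1^2 \cdot \log(1/\delta)\right) = O\!\left(\frac{dk \cdot \log(1/\delta)}{2^n}\right),
\]
as required.

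\medskip

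There is no genuine obstacle here; the corollary is a direct specialization. The one point that deserves care is the spectral norm estimate, since it is the step that converts the ``support size'' hypothesis $|\supp(f)| \leq d$ into a bound on $\|\hat{f}\|_1^2$, and it crucially uses both the $\{-1,0,+1\}$-valuedness (via Parseval) and the $k$-sparsity (via Cauchy--Schwarz over the support of $\hat{f}$). Everything else is a mechanical substitution of $\eps = 1/2$ into Lemma~\ref{lemma:f_rep}.
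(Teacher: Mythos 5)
Your proposal is correct and matches the paper's own proof essentially line for line: the paper also combines Cauchy--Schwarz over the at-most-$k$ nonzero Fourier coefficients with Parseval's theorem to get $\|\hat{f}\|_1^2 \leq k \cdot d/2^n$, and then invokes Lemma~\ref{lemma:f_rep} with $\eps = 1/2$ to obtain the stated size bound $O(dk \log(1/\delta)/2^n)$. There is nothing to add or correct.
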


\begin{proof}
By the Cauchy-Schwarz inequality and Parseval's theorem, we obtain that
$$\frac{\|\hat{f}\|_1^2}{k} \leq \sum_{S \subseteq [n]}{\hat{f}(S)^2} = 2^{-n} \cdot \sum_{x \in \Fset_2^n}{f(x)^2} \leq \frac{d}{2^n}.$$
The corollary follows from Lemma~\ref{lemma:f_rep}, applied with $\eps=1/2$, for $|\calF| = O(\|\hat{f}\|_1^2 \log(1/\delta)/\eps^2) = O(dk \log(1/\delta) / 2^n)$.
\end{proof}

\begin{corollary}\label{cor:weight}
The number of $k$-Fourier-sparse functions $f : \Fset_2^n \rightarrow \{-1,0,+1\}$ satisfying $|\supp(f)| \leq d$ is $2^{O(nd k \log k /2^n)}$.
\end{corollary}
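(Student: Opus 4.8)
The plan is to use the succinct representation furnished by Corollary~\ref{cor:f_rep} to encode each function by a short binary string, and then to count the number of possible strings. The key point is that Corollary~\ref{cor:f_rep} associates to every $k$-Fourier-sparse $f:\Fset_2^n\to\{-1,0,+1\}$ with $|\supp(f)|\le d$ a collection $\calF$ of $m := O(dk\log(1/\delta)/2^n)$ subsets of $[n]$ together with signs, such that the rounded trigonometric-polynomial approximation recovers $f$ on all but a $\delta$ fraction of the inputs. Since $f$ takes values only in $\{-1,0,+1\}$ and the approximation error is strictly below $1/2$, rounding the approximation to the nearest integer recovers $f(x)$ exactly on all but a $\delta$ fraction of $x$. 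So the pair $(\calF,(a_S))$ almost determines $f$; we only need a small amount of extra information to fix the remaining $\delta\cdot 2^n$ disagreement points.

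First I would choose $\delta$ small enough that the number of exceptional inputs is controlled. A natural choice is to take $\delta$ on the order of $2^n/(4k)$ worth of fraction, i.e., set the number of exceptional points to be below the minimum distance $2^n/(2k)$ guaranteed by Claim~\ref{claim:sparse_dist}. Concretely, pick $\delta = 1/(4k)$ (so $\delta\cdot 2^n < 2^n/(2k)$). Then two distinct such functions sharing the same representation $(\calF,(a_S))$ would each differ from the common rounded approximation on fewer than $2^n/(4k)$ points, hence differ from \emph{each other} on fewer than $2^n/(2k)$ points; since both are $k$-Fourier-sparse, Claim~\ref{claim:sparse_dist} forces them to be equal. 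Thus the map $f\mapsto(\calF,(a_S))$ is \emph{injective}, and no correction information is needed at all.

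Next I would count the number of possible representations. With $\delta=1/(4k)$ we have $m = O(dk\log k/2^n)$. Each element of $\calF$ is a subset of $[n]$, describable by $n$ bits, and each carries one sign bit, so the whole representation is a binary string of length $O(n\cdot m) = O(n d k\log k/2^n)$. The number of such strings, and hence the number of functions, is at most $2^{O(ndk\log k/2^n)}$, which is exactly the claimed bound. I should also handle the degenerate case where $\|\hat f\|_1 = 0$, i.e. $f\equiv 0$; this contributes a single function and is absorbed into the count.

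The main obstacle, and the step deserving the most care, is establishing injectivity cleanly: I must make sure the ``rounded approximation'' is a well-defined function of the representation $(\calF,(a_S))$ alone (not of $f$), so that two functions with the same representation are compared against the \emph{same} reference function. This is where the spectral-norm factor $\|\hat f\|_1$ in the formula needs attention — it appears to depend on $f$, so I would either note that $\|\hat f\|_1$ is itself quantized (an integer multiple of a fixed step, since the $\hat f(S)$ are multiples of $2^{-n}$) and fold it into the encoding with negligibly many additional bits, or observe that the rounding to $\{-1,0,+1\}$ of the scaled sum is insensitive to replacing $\|\hat f\|_1$ by any nearby value, and encode a sufficiently accurate approximation of it. Once the reference function is pinned down as a function of the encoded data, the triangle-inequality argument via Claim~\ref{claim:sparse_dist} closes the proof, and the counting is routine.
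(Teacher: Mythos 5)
Your proposal follows essentially the same route as the paper: encode each function by the data produced by Corollary~\ref{cor:f_rep} (including a quantized value of $\|\hat f\|_1$, which the paper also folds into the encoding), use the minimum distance from Claim~\ref{claim:sparse_dist} to make the encoding injective, and count binary strings. One boundary fix is needed: Corollary~\ref{cor:f_rep} gives disagreement on \emph{at most} a $\delta$ fraction of inputs, so with your choice $\delta = 1/(4k)$ two functions sharing an encoding differ on at most $2^n/(2k)$ points, which does not contradict the bound $\dist(f,g)\geq 2^n/(2k)$ of Claim~\ref{claim:sparse_dist}; taking $\delta$ strictly smaller (the paper uses $1/(5k)$) closes this case.
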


\begin{proof}
For every $k$-Fourier-sparse function $f : \Fset_2^n \rightarrow \{-1,0,+1\}$
satisfying $|\supp(f)| \leq d$, let $\calF$ and $(a_S)_{S \in \calF}$ be as
given by Corollary~\ref{cor:f_rep} for, say, $\delta = 1/(5k)$. Since the
range of $f$ is $\{-1,0,+1\}$, it follows that the collection $\calF$, the
signs $(a_S)_{S \in \calF}$, and the value of $\|\hat{f}\|_1$ define a
function of distance at most $\delta \cdot 2^n$ from $f$. Notice that by
Claim~\ref{claim:sparse_dist} and our choice of $\delta$, the distance between
every two distinct $k$-Fourier-sparse functions is larger than $2\delta \cdot 2^n$.
Thus, a function of distance at most $\delta \cdot 2^n$ from $f$ fully
defines $f$. This implies that $f$ can be represented by a
binary string of length $O(n \cdot d k \log k/2^n)$, so the total number of
such functions is $2^{O(nd k \log k /2^n)}$.
\end{proof}

The bound in Corollary~\ref{cor:weight} implies a bound on the number of Fourier-sparse Boolean functions of bounded distance from a given Boolean function.

\begin{corollary}\label{cor:list}
For every $k$-Fourier-sparse Boolean function $f : \Fset_2^n \rightarrow \{0,1\}$, the number of $k$-Fourier-sparse Boolean functions of distance at most $d$ from $f$ is $2^{O(nd k \log k /2^n)}$.
\end{corollary}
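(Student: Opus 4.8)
The plan is to reduce directly to the counting bound of Corollary~\ref{cor:weight} via a difference trick. Fix the $k$-Fourier-sparse Boolean function $f$, and let $g$ range over the $k$-Fourier-sparse Boolean functions with $\dist(f,g) \le d$. For each such $g$ I would consider the difference $h = g - f$, and bound the number of distinct $h$ that can arise.

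First I would record three properties of $h$. Since both $f$ and $g$ are Boolean, i.e.\ $\{0,1\}$-valued, the difference $h$ takes values in $\{-1,0,+1\}$. Its support $\supp(h) = \{x \in \Fset_2^n \mid g(x) \neq f(x)\}$ has size exactly $\dist(f,g) \le d$. Finally, since $\supp(\hat{h}) \subseteq \supp(\hat{f}) \cup \supp(\hat{g})$, the function $h$ is $2k$-Fourier-sparse. Thus each admissible $h$ lies in the class of $2k$-Fourier-sparse functions $\Fset_2^n \to \{-1,0,+1\}$ with support size at most $d$.

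Next I would apply Corollary~\ref{cor:weight} with sparsity parameter $2k$ in place of $k$ to bound the number of possible $h$. This gives a bound of $2^{O(n \cdot d \cdot 2k \cdot \log(2k)/2^n)}$, which is $2^{O(n d k \log k /2^n)}$, because the factor of $2$ in the sparsity and the additive constant in $\log(2k) = O(\log k)$ (for $k \ge 2$) are absorbed into the $O(\cdot)$ in the exponent. Since $f$ is fixed, the map $g \mapsto g - f$ is injective, so the number of admissible $g$ is at most the number of admissible $h$, giving the claimed bound.

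There is essentially no obstacle here; the only points requiring a moment's care are (i) checking that subtracting a fixed function keeps us inside the class handled by Corollary~\ref{cor:weight} --- this is exactly why the hypothesis demands $f$ to be both Boolean (forcing $h \in \{-1,0,+1\}$) and $k$-Fourier-sparse (keeping $h$ at sparsity $2k$) --- and (ii) verifying that the doubling of the sparsity parameter does not affect the asymptotics of the exponent. Both are routine, so the corollary follows immediately from Corollary~\ref{cor:weight}.
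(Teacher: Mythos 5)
Your proof is correct and is essentially identical to the paper's: both map each admissible $g$ to its difference with $f$, observe that this difference is a $2k$-Fourier-sparse function into $\{-1,0,+1\}$ with support size at most $d$, invoke Corollary~\ref{cor:weight} (with the doubled sparsity parameter absorbed into the $O(\cdot)$ in the exponent), and conclude via injectivity of the difference map. No gaps.
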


\begin{proof}
Let $f : \Fset_2^n \rightarrow \{0,1\}$ be a $k$-Fourier-sparse Boolean function. Consider the mapping that maps every $k$-Fourier-sparse Boolean function $g : \Fset_2^n \rightarrow \{0,1\}$, whose distance from $f$ is at most $d$, to the function $h = f-g$. Observe that $h$ is a $2k$-Fourier-sparse function from $\Fset_2^n$ to $\{-1,0,+1\}$ satisfying $|\supp(h)| \leq d$. By Corollary~\ref{cor:weight}, the number of such functions $h$ is bounded by $2^{O(nd k \log k /2^n)}$. Since the above mapping is bijective, this bound holds for the number of functions $g$ as well.
\end{proof}

Equipped with Corollary~\ref{cor:weight}, we restate and prove Theorem~\ref{thm:IntroList}.

\newtheorem*{thma}{Theorem~\ref{thm:IntroList}}
\begin{thma}
For every function $f : \Fset_2^n \rightarrow \R$, the number of $k$-Fourier-sparse Boolean functions of distance at most $d$ from $f$ is $2^{O(nd k \log k /2^n)}$.
\end{thma}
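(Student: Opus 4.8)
The plan is to reduce the general case, where the center $f$ ranges over all real-valued functions, to Corollary~\ref{cor:list}, which already handles the case where the center is itself a $k$-Fourier-sparse Boolean function. The key observation is that although $f$ need not be Boolean or sparse, every function we wish to count is Boolean and sparse, and the triangle inequality lets us re-center the ball at one of them.

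First I would dispose of the trivial case: if no $k$-Fourier-sparse Boolean function lies within distance $d$ of $f$, then the count is $0$ and the bound holds vacuously. Otherwise, fix one such function $f_0 : \Fset_2^n \rightarrow \{0,1\}$, so that $f_0$ is $k$-Fourier-sparse, Boolean, and satisfies $\dist(f, f_0) \leq d$. Now consider any $k$-Fourier-sparse Boolean function $g$ with $\dist(f, g) \leq d$. Since $\dist$ is the Hamming distance on vectors indexed by $\Fset_2^n$, it obeys the triangle inequality, and hence $\dist(f_0, g) \leq \dist(f_0, f) + \dist(f, g) \leq 2d$. Therefore every function we are counting lies within distance $2d$ of the single $k$-Fourier-sparse Boolean function $f_0$.

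It then remains to apply Corollary~\ref{cor:list} with center $f_0$ and distance parameter $2d$ in place of $d$. This bounds the number of $k$-Fourier-sparse Boolean functions within distance $2d$ of $f_0$ by $2^{O(n \cdot 2d \cdot k \log k / 2^n)} = 2^{O(ndk\log k / 2^n)}$, where the factor of $2$ is absorbed into the constant hidden by the $O(\cdot)$ notation. Since the set we wish to count is contained in this ball around $f_0$, the same bound applies to it, completing the proof.

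I do not anticipate a genuine obstacle here: the entire content of the argument is the re-centering trick. The only points to verify are that $\dist$ satisfies the triangle inequality (immediate, as it is a Hamming distance) and that replacing $d$ by $2d$ inside the exponent leaves the $O(\cdot)$ estimate unchanged (immediate, since the $2$ is a constant). The substantive work has already been carried out in Corollaries~\ref{cor:weight} and~\ref{cor:list}.
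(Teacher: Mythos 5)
Your proposal is correct and is essentially identical to the paper's own proof: both handle the empty case trivially, fix one $k$-Fourier-sparse Boolean function within distance $d$ of $f$, use the triangle inequality to re-center the ball at that function with radius $2d$, and then invoke Corollary~\ref{cor:list}, absorbing the factor $2$ into the $O(\cdot)$. There is nothing to add.
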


\begin{proof}
If there is no $k$-Fourier-sparse Boolean function of distance at most $d$ from $f$, then the bound trivially holds. So assume that such a function $g : \Fset_2^n \rightarrow \{0,1\}$ exists. Observe that every $k$-Fourier-sparse Boolean function of distance at most $d$ from $f$ has distance at most $2d$ from $g$. Thus, by Corollary~\ref{cor:list} applied to $g$, the number of such functions is at most $2^{O(n d k \log k /2^n)}$.
\end{proof}

\subsection{The Sample Complexity of Learning Fourier-Sparse Boolean Functions}

The {\em sample complexity} of learning a class of functions is the minimum number of uniform and independent random samples needed from a function in the class for specifying it with high success probability. Here we consider the class of $k$-Fourier-sparse Boolean functions on $n$ variables, and show how Theorem~\ref{thm:IntroList} implies an upper bound on the sample complexity of learning it (Corollary~\ref{cor:learn_Bool}).

\begin{theorem}\label{thm:learning}
For every $n$, $1 < k \leq 2^n$, and a $k$-Fourier-sparse function $f: \Fset_2^n \rightarrow \R$, the following holds. The probability that when sampling $O(n \cdot k \log k)$ uniform and independent random samples from $f$, there exists a $k$-Fourier-sparse Boolean function $g \neq f$ that agrees with $f$ on all the samples is $2^{-\Omega(n \log k)}$.
\end{theorem}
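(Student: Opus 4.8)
The plan is to fix the $k$-Fourier-sparse function $f$, draw $m = C \cdot n k \log k$ samples for a sufficiently large absolute constant $C$, and bound by a union bound the probability that some competing $k$-Fourier-sparse Boolean function $g \neq f$ agrees with $f$ on every sample. The starting observation is that for a fixed such $g$ at distance $t = \dist(f,g)$, each uniform sample fails to separate $f$ and $g$ with probability exactly $1 - t/2^n$, so $g$ survives all $m$ samples with probability $(1 - t/2^n)^m \le e^{-mt/2^n}$. A naive union bound over all (up to $2^{O(nk)}$) candidate functions, combined only with the minimum-distance bound $t \ge 2^n/(2k)$ from Claim~\ref{claim:sparse_dist}, would yield just the weaker $O(nk^2)$ bound; the improvement comes from stratifying the candidates by their distance from $f$.

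The key step is to partition the competing functions into dyadic shells according to $t \in [2^j, 2^{j+1})$. By Claim~\ref{claim:sparse_dist} every competitor satisfies $t \ge 2^n/(2k)$ (here we use that $f$ itself is $k$-Fourier-sparse, so $f-g$ is $2k$-Fourier-sparse), hence only shells with $2^j \ge 2^n/(4k)$ occur, giving $O(\log k)$ shells in total. Within the $j$-th shell I would upper bound the number of candidates using the list-decoding estimate: by Theorem~\ref{thm:IntroList} the number of $k$-Fourier-sparse Boolean functions of distance at most $2^{j+1}$ from $f$ is at most $2^{O(n \cdot 2^{j+1} k \log k / 2^n)}$. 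Writing $\alpha = 2^j/2^n$ for the fractional distance, the union bound over the $j$-th shell is at most $2^{O(n \alpha k \log k)} \cdot e^{-m\alpha}$.

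The heart of the argument is that both the logarithm of the shell size and the exponent of the survival probability scale \emph{linearly} in $\alpha$, so after taking logarithms the shell contribution is $2^{\alpha \cdot (O(nk\log k) - \Omega(m))}$. Choosing $C$ large enough forces the bracket to be $\le -\Omega(nk\log k)$, and then the minimum-distance bound $\alpha \ge 1/(2k)$ converts this into an exponent of $-\Omega(n \log k)$ uniformly over all shells. Summing the $O(\log k)$ shell bounds keeps the total at $2^{-\Omega(n \log k)}$, since the polylogarithmic shell count is absorbed into the exponent (as $n\log k$ dominates $\log\log k$).

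The main obstacle is conceptual rather than computational: a single union bound over all candidates is far too lossy, and the entire point is to balance the two opposing effects — there are few functions close to $f$ (by Theorem~\ref{thm:IntroList}) but they are hard to hit with random samples, while distant functions are abundant but easy to eliminate. The delicate part is verifying that the product of shell size and survival probability is uniformly small across every distance scale, which works precisely because the list-decoding bound grows only linearly in the distance, matching the linear decay of the survival probability. The role of Claim~\ref{claim:sparse_dist} is essential here, as it confines $\alpha$ to the range $[1/(2k),1]$ and thereby trades a factor of $k$ in the exponent, turning the $nk\log k$ into the desired $n\log k$.
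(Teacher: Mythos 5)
Your proposal is correct and follows essentially the same route as the paper's proof: a dyadic stratification of the competing $k$-Fourier-sparse Boolean functions by their distance from $f$ (the paper indexes shells by $\ell$ with distance in $[2^{n-\ell},2^{n-\ell+1}]$, $\ell \le \lfloor \log_2 2k\rfloor$, which matches your $\alpha = 2^{-\ell}$), the list-decoding bound of Theorem~\ref{thm:IntroList} to count each shell, the survival bound $(1-2^{-\ell})^q \le e^{-q/2^{\ell}}$, and a final union bound over the $O(\log k)$ shells using the minimum distance from Claim~\ref{claim:sparse_dist} to convert the exponent into $-\Omega(n\log k)$.
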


\begin{proof}
Consider $q = O(nk \log k)$ samples $(x,f(x))$ from a $k$-Fourier-sparse function $f: \Fset_2^n \rightarrow \R$, where $x$ is distributed uniformly and independently in $\Fset_2^n$. By Claim~\ref{claim:sparse_dist}, the distance between $f$ and every other $k$-Fourier-sparse function is at least $2^{n}/(2k)$. For an integer $\ell \in [1, \lfloor \log_2 2k \rfloor]$, consider all the $k$-Fourier-sparse Boolean functions whose distance from $f$ is in $[2^{n-\ell},2^{n-\ell+1}]$. By Theorem~\ref{thm:IntroList}, the number of such functions is $2^{O(n k \log k/2^{\ell})}$. The probability that such a function agrees with $q$ random independent samples of $f$ is at most $(1-2^{-\ell})^q$. By the union bound, the probability that at least one of these functions agrees with the $q$ samples is at most
$$2^{O(n k \log k /2^{\ell})} \cdot (1-2^{-\ell})^q \leq 2^{O(n k \log k /2^{\ell})} \cdot e^{-q/2^{\ell}} \leq 2^{-\Omega(n \log k)},$$
where the last inequality holds for an appropriate choice of $q = O(n k \log k)$. By applying the union bound over all the values of $\ell$, it follows that with probability $1-2^{-\Omega(n \log k)}$ all the $k$-Fourier-sparse Boolean functions (besides $f$) are eliminated, completing the proof.
\end{proof}

The following corollary follows immediately from Theorem~\ref{thm:learning} and confirms Corollary~\ref{cor:IntroSample}.

\begin{corollary}\label{cor:learn_Bool}
For every $n$ and $1 \leq k \leq 2^n$, the number of uniform and independent random samples required for learning the class of $k$-Fourier-sparse Boolean functions on $n$ variables with success probability $1-2^{-\Omega(n \log k)}$ is $O(n \cdot k \log  k)$.
\end{corollary}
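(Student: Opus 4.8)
The plan is to observe that this statement is simply Theorem~\ref{thm:learning} rephrased in the language of sample complexity, so essentially all the work has already been done and what remains is to exhibit a learner whose correctness the theorem guarantees. Concretely, I would fix an unknown target $f$ from the class of $k$-Fourier-sparse Boolean functions on $n$ variables, draw $q = O(n k \log k)$ uniform and independent samples $(x_i, f(x_i))$, and let the (information-theoretic, not necessarily efficient) learner output any $k$-Fourier-sparse Boolean function that agrees with $f$ on all $q$ sampled points. Since $f$ is itself $k$-Fourier-sparse and Boolean, it always passes this consistency test; the only way the learner can fail to pin down $f$ is if some distinct $g$ in the class also agrees with $f$ on every sample.

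First I would invoke Theorem~\ref{thm:learning} directly in the regime $1 < k \le 2^n$: it states exactly that, after $q = O(n k \log k)$ uniform independent samples, the probability that there exists a $k$-Fourier-sparse Boolean $g \ne f$ consistent with all samples is $2^{-\Omega(n \log k)}$. Conditioned on this high-probability event, the function the learner returns is unique and hence equals $f$, which is precisely what it means to specify $f$ with success probability $1 - 2^{-\Omega(n \log k)}$. This settles every $k > 1$. Second, I would dispose of the boundary case $k = 1$ (permitted by the corollary but excluded by the hypothesis of Theorem~\ref{thm:learning}) by hand: the only $1$-Fourier-sparse Boolean functions are the two constants $0$ and $1$, which disagree on every input, so a single sample already determines the target and the claimed bound holds trivially.

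I do not expect a genuine obstacle here, since the corollary is pure bookkeeping on top of Theorem~\ref{thm:learning}, whose proof already carries the weight by partitioning the $k$-Fourier-sparse Boolean functions into distance shells $[2^{n-\ell}, 2^{n-\ell+1}]$, bounding the number in each shell via the list-decoding estimate of Theorem~\ref{thm:IntroList}, and taking a union bound. The only points needing care are matching the learner's success criterion to the stated definition of sample complexity (uniqueness of the consistent function is exactly ``specifying'' $f$) and checking that the $k = 1$ corner, where the $\log k$ factor degenerates, is handled separately rather than silently; both are routine.
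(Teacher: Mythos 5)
Your proposal matches the paper exactly: the paper states that Corollary~\ref{cor:learn_Bool} ``follows immediately from Theorem~\ref{thm:learning},'' and your argument---exhibiting the consistency-based learner, invoking the theorem for $1 < k \le 2^n$, and concluding uniqueness of the consistent hypothesis---is precisely that immediate deduction. Your separate treatment of the degenerate $k=1$ case (where the class consists only of the two constant functions) is a careful touch the paper leaves implicit, but it changes nothing substantive.
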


We end with the following simple lower bound.
\begin{theorem}\label{thm:LearningLower}
For every $n$ and $1 \leq k \leq 2^n$, the number of uniform and independent random samples required for learning the class of $k$-Fourier-sparse Boolean functions on $n$ variables with constant success probability is $\Omega(k \cdot (n-\log_2 k))$.
\end{theorem}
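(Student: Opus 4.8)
The plan is to exhibit an information-theoretically hard subfamily of the class and apply Fano's inequality, exploiting the fact that uniform samples of a sparse indicator almost always return $0$ and are therefore nearly useless. Set $m = \lfloor \log_2 k \rfloor$ and consider the family $\mathcal{V}$ of indicator functions $\mathbb{1}_V$ of affine subspaces $V \subseteq \Fset_2^n$ of co-dimension $m$. By Claim~\ref{claim:affineSparse}, each such function is $2^m \le k$-Fourier-sparse and Boolean, so $\mathcal{V}$ is contained in our class. I would let the hidden function be $\mathbb{1}_V$ for $V$ drawn uniformly at random among the co-dimension-$m$ affine subspaces, draw $q$ uniform independent samples $(X_i, Y_i)$ with $Y_i = \mathbb{1}_V(X_i)$, and let the (possibly randomized) learner output a guess $\hat V$. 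Since $V \to (X_i,Y_i)_i \to \hat V$ is a Markov chain, it suffices to find how large $q$ must be for $I(V;(X_i,Y_i)_i)$ to be large enough for Fano's inequality to permit constant success probability; a worst-case lower bound then follows because the average error over uniform $V$ lower bounds the worst-case error over $\mathcal{V}$.

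The crucial step is to bound the information a single sample reveals about $V$. Because $X_i$ is independent of $V$, we have $I(V; (X_i,Y_i)) = I(V; Y_i \mid X_i) \le H(Y_i)$, and the marginal probability that a uniform point lies in a uniformly random co-dimension-$m$ affine subspace is exactly $2^{-m}$; hence $I(V;(X_i,Y_i)) \le H_b(2^{-m}) = O(m \cdot 2^{-m})$, where $H_b$ denotes binary entropy. Since the pairs $(X_i,Y_i)$ are conditionally independent given $V$, the chain rule yields $I(V;(X_i,Y_i)_i) \le \sum_i I(V;(X_i,Y_i)) \le q \cdot H_b(2^{-m}) = O(q \cdot m \cdot 2^{-m})$. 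This is the heart of the argument: each sample contributes only $O(m\,2^{-m})$ bits, reflecting that with probability $1-2^{-m}$ it returns $0$ and barely constrains $V$.

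To finish, I would lower bound the entropy of $V$. The number of co-dimension-$m$ affine subspaces of $\Fset_2^n$ is $2^m \cdot \binom{n}{m}_2 \ge 2^{m(n-m)}$ (each Gaussian-binomial factor is at least $2^{n-m}$), so $H(V) \ge m(n-m)$. Fano's inequality shows that if the learner succeeds with probability bounded away from $0$, then $I(V;\hat V) = \Omega(H(V))$, and the data-processing inequality gives $I(V;(X_i,Y_i)_i) \ge I(V;\hat V)$. Combining with the per-sample bound yields $q = \Omega\!\big(m(n-m)/(m\,2^{-m})\big) = \Omega\!\big(2^m(n-m)\big) = \Omega\!\big(k \cdot (n - \log_2 k)\big)$, as claimed. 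The main obstacle here is conceptual rather than computational: one must resist the incorrect temptation to bound the sample complexity by $H(V)$ alone, which would give only $\Omega(\log k \cdot (n-\log_2 k))$; the extra factor of $\Theta(k/\log k)$ arises precisely from the dilution in the per-sample information bound, and ensuring that the two estimates $H_b(2^{-m})$ and $\log_2|\mathcal{V}|$ combine to exactly the stated form is the delicate part.
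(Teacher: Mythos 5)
Your proof is correct, but it takes a genuinely different route from the paper's. Both arguments use the same hard subfamily --- indicators of affine subspaces of co-dimension $\lfloor \log_2 k\rfloor$, via Claim~\ref{claim:affineSparse} --- and both exploit the same phenomenon, namely that a uniform sample lands in the subspace with probability only about $1/k$ and is otherwise uninformative. The paper, however, makes this quantitative through a counting/compression argument: it invokes Yao's minimax principle to fix a deterministic learner with fixed query points, uses Markov's inequality to argue that most target functions yield at most $2q/(pk)$ one-evaluations, bounds the number of such evaluation sequences by $\binom{q}{\le 2q/(pk)} \le 2^{O(q\log_2 k/k)}$, and compares this with $|\calG|$. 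You instead run the standard information-theoretic machinery: per-sample mutual information is at most $H(Y_i\mid X_i)$, which equals the binary entropy of $2^{-m}$ and hence is $O(m2^{-m})$ (this bound plays exactly the role of the paper's count of low-weight transcripts); conditional independence of the samples given $V$ gives $I(V;(X_i,Y_i)_i) = O(qm2^{-m})$ by the chain rule; and Fano plus data processing forces this to be $\Omega(H(V)) = \Omega(m(n-m))$. Your route buys a few things: it handles randomized learners and random sample locations directly without Yao's principle, it avoids the paper's final case split between $k \ge 2^{n/2}$ and $k < 2^{n/2}$ (your count $|\mathcal{V}| \ge 2^{m(n-m)}$ feeds straight into the stated bound), and it dispenses with the paper's separate treatment of the $pk < 2$ regime. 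The paper's argument, in exchange, is entirely elementary and self-contained, needing only Markov's inequality and a binomial-sum estimate. Two small points you should make explicit in a polished write-up: (i) a learner for the full class restricted to $\mathcal{V}$ can be post-processed into a decoder $\hat V$ ranging over $\mathcal{V}$ without decreasing the success probability, so Fano applies in the form $I(V;\hat V) \ge pH(V) - 1$; and (ii) this last inequality only yields $\Omega(H(V))$ when $H(V)$ exceeds a constant, so the degenerate regimes (e.g., $k$ within a constant factor of $2^n$, or $k=O(1)$) must be dismissed as trivial --- the same edge cases the paper's proof also implicitly sets aside.
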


\begin{proof}
Assume without loss of generality that $k$ is a power of $2$.
Let $A$ be an algorithm for learning the class above with constant success probability $p>0$ using $q$ uniform and independent random samples. Consider the class $\calG$ of indicators of affine subspaces of $\Fset_2^n$ of co-dimension $\log_2 k$ (i.e., affine subspaces of $\Fset_2^n$ of size $2^n/k$). By  Claim~\ref{claim:affineSparse}, the functions in $\calG$ are $k$-Fourier-sparse. Observe that their number satisfies
\[ |\calG| = 2^{\Theta(n \cdot \min(\log_2 k, n-\log_2 k))}.\]
By Yao's minimax principle, there exists a deterministic algorithm $A'$ (obtained by fixing the random coins of $A$) that given evaluations of a function, chosen uniformly at random from $\calG$, on a {\em fixed} collection of $q$ points in $\Fset_2^n$, learns it with success probability $p$.

Now, observe that the expected number of $1$-evaluations that $A'$ receives is $q/k$. By Markov's inequality, the probability that $A'$ receives at least $2q/(pk)$ $1$-evaluations is at most $p/2$. It follows that for at least $p/2$ fraction of the functions in $\calG$ the algorithm $A'$ receives at most $2q/(pk)$ $1$-evaluations and learns them correctly. Assuming that $pk \geq 2$, the number of possible evaluation sequences on these inputs is at most
\[\sum_{i=0}^{2q/(pk)}\binom{q}{i} \leq (k \cdot pe/2)^{2q/(pk)} \leq 2^{O(q \cdot \log_2 k /k)},\]
where for the first inequality we used the standard inequality $\sum_{i=0}^{t}\binom{q}{i} \leq (qe/t)^t$ which holds for $t \leq q$ (see, e.g.,~\cite[Proposition~1.4]{JuknaEC}).
The above is bounded from below by $|\calG| \cdot p/2$, implying that
\[q \geq \Omega( n \cdot \min(\log_2 k, n-\log_2 k) \cdot k /\log_2 k) \geq \Omega(k \cdot (n-\log_2 k)),\]
where the last inequality follows by considering separately the cases of $k \geq 2^{n/2}$ and $k < 2^{n/2}$.
In case that $pk < 2$, the number of possible evaluation sequences is at most $2^q$, and the bound follows similarly using the assumption that $p$ is a fixed constant.
\end{proof}

\section{Testing Booleanity of Fourier-Sparse Functions}

In this section we prove upper and lower bounds on the query complexity of testing Booleanity of Fourier-sparse functions. For a parameter $k$, consider the problem in which given access to a $k$-Fourier-sparse function $f : \Fset_2^n \rightarrow \R$ one has to decide if $f$ is Boolean, i.e., $f(x) \in \{0,1\}$ for every $x \in \Fset_2^n$, or not, with some constant success probability.

\subsection{Upper Bound}

As mentioned before, Gur and Tamuz proved in~\cite{GurT13} that every $k$-Fourier-sparse non-Boolean function $f$ on $n$ variables satisfies $f(x) \notin \{0,1\}$ for at least $\Omega(2^n/k^2)$ inputs $x \in \Fset_2^n$ (see Claim~\ref{claim:non-zero}). Thus, querying the input function $f$ on $O(k^2)$ independent and random inputs suffices in order to catch a non-Boolean value of $f$ if such a value exists. In the following lemma it is shown that it is not really needed to choose the $O(k^2)$ random vectors {\em independently}. It turns out that a restriction of a $k$-Fourier-sparse non-Boolean function to a random linear subspace of size $O(k^2)$, that is, of dimension $\approx 2\log_2 k$, is with high probability non-Boolean. Thus, the tester could randomly pick such a subspace and query $f$ on all of its vectors. This decreases the amount of randomness used in the tester of~\cite{GurT13} from $O(nk^2)$ to $O(n \log k)$. More importantly for us, this reduces the problem of testing Booleanity of $k$-Fourier-sparse functions on $n$ variables to the case of $k = \Theta(2^{n/2})$.

\begin{lemma}\label{lemma:restriction}
Let $f:\Fset_2^n \rightarrow \R$ be a $k$-Fourier-sparse non-Boolean function, and denote $L = (k^2+k+2)/2$. Then, for every $\delta > 0$, the restriction of $f$ to a uniformly chosen random linear subspace of dimension $r \geq \log_2 (L/\delta)$ is also non-Boolean with probability at least $1-\delta$.
\end{lemma}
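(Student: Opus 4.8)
The plan is to reduce the statement to a hitting problem and then bound the miss probability by a second-moment computation that becomes especially clean over $\Fset_2$. First I would let $B = \{x \in \Fset_2^n : f(x) \notin \{0,1\}\}$ be the set of non-Boolean inputs. The restriction of $f$ to a linear subspace $V$ is non-Boolean precisely when $V \cap B \neq \emptyset$, so it suffices to show that a uniformly random $r$-dimensional subspace $V$ meets $B$ with probability at least $1-\delta$. Since $f$ is $k$-Fourier-sparse and non-Boolean, Claim~\ref{claim:non-zero} gives $|B| \ge \frac{2}{k^2+k+2}\cdot 2^n = 2^n/L$. If $0 \in B$ then every linear subspace already contains a non-Boolean point and we are done, so I may assume that $B$ consists only of nonzero vectors.

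Next I would set $X = |V \cap B| = \sum_{x \in B} \mathbb{1}[x \in V]$ and estimate its first two moments. By symmetry, for any fixed nonzero $x$ one has $\Pr[x \in V] = p_1 := (2^r-1)/(2^n-1)$, and for any two linearly independent $x,y$ one has $\Pr[x,y \in V] = p_2 := (2^r-1)(2^r-2)/((2^n-1)(2^n-2))$. The key simplification is that over $\Fset_2$ any two distinct nonzero vectors are automatically linearly independent, so every off-diagonal pair in the expansion of $\expectation[X^2]$ contributes exactly $p_2$; thus $\expectation[X] = |B| p_1$ and $\expectation[X^2] = |B| p_1 + |B|(|B|-1) p_2$. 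A short calculation shows $p_2 \le p_1^2$ (this reduces to $2^r \le 2^n$, i.e.\ $r \le n$), so the pairwise terms are negatively correlated and the variance satisfies $\expectation[X^2]-\expectation[X]^2 = |B|(p_1-p_2) + |B|^2(p_2-p_1^2) \le |B| p_1 = \expectation[X]$.

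Finally I would convert the bounded variance into a hitting bound. The Cauchy--Schwarz inequality gives $\Pr[X>0] \ge \expectation[X]^2/\expectation[X^2]$, hence $\Pr[X=0] \le \operatorname{Var}(X)/\expectation[X^2]$; feeding in $\operatorname{Var}(X) \le \expectation[X]$ and $\expectation[X^2] = \operatorname{Var}(X)+\expectation[X]^2$ yields the clean estimate $\Pr[X=0] \le 1/(1+\expectation[X])$. It then remains to lower bound $\expectation[X] = |B| p_1 \ge (2^n/L)\cdot(2^r-1)/(2^n-1) > (2^r-1)/L$; using $r \ge \log_2(L/\delta)$, i.e.\ $2^r \ge L/\delta$, together with $L \ge 1$, this is at least $1/\delta - 1$, whence $\Pr[X=0] \le \delta$, as required. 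The main obstacle is controlling the variance without losing constant factors: everything hinges on the negative-correlation inequality $p_2 \le p_1^2$ together with the sharpened second-moment bound $\Pr[X=0]\le 1/(1+\expectation[X])$, which convert the easy first-moment statement ``the expected number of non-Boolean points in $V$ is at least $1/\delta$'' into the desired high-probability hitting guarantee.
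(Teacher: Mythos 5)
Your proof is correct, but it follows a genuinely different route from the paper's. The paper argues iteratively: it applies Claim~\ref{claim:non-zero} to find at least $2^n/L$ non-Boolean points, extracts from them a set $S$ of at least $\log_2(2^n/L)$ \emph{linearly independent} non-Boolean points, observes that a uniformly random hyperplane misses all of $S$ with probability about $2^{-|S|} \leq L/2^n$, and then repeats this argument on the restricted function --- which is still $k$-Fourier-sparse, so Claim~\ref{claim:non-zero} applies afresh at every level --- descending through a chain of co-dimension-one restrictions from dimension $n$ down to $r$, and summing the per-step failure probabilities into the geometric series $L/2^n + L/2^{n-1} + \cdots + L/2^{r+1} \leq L/2^r \leq \delta$. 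You instead reach dimension $r$ in a single shot via the second-moment method: you invoke Claim~\ref{claim:non-zero} only once, count $X = |V \cap B|$ for a uniformly random $r$-dimensional subspace $V$, exploit the $\Fset_2$-specific fact that distinct nonzero vectors are automatically linearly independent (so all off-diagonal terms in $\expectation[X^2]$ equal the same $p_2$), verify the negative-correlation inequality $p_2 \leq p_1^2$, and conclude $\Pr[X=0] \leq 1/(1+\expectation[X]) < \delta$; your explicit handling of the corner case $0 \in B$ is genuinely needed for this hitting-set reformulation and is done correctly. As for what each approach buys: yours is a self-contained, one-step probabilistic argument that uses the uncertainty principle only once and never needs the fact that restrictions preserve $k$-Fourier-sparsity, at the price of exact first- and second-moment computations over random subspaces; the paper's chaining argument needs only the crude rank bound (a set of size $m$ has rank at least $\log_2 m$) together with the near-independence of linearly independent vectors relative to a random hyperplane, and it reuses the sparsity structure of the function at every scale of the recursion. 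Both arguments yield the same quantitative conclusion.
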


\begin{proof}
Let $f:\Fset_2^n \rightarrow \R$ be a $k$-Fourier-sparse non-Boolean function. By Claim~\ref{claim:non-zero}, there are at least $2^n/L$ vectors $x \in \Fset_2^n$ for which $f(x) \notin \{0,1\}$. This implies that there exists a set $S$ of at least $\log_2(2^n/L)$ {\em linearly independent} vectors in $\Fset_2^n$ on which $f$ is not Boolean. Consider a linear subspace $V \subseteq \Fset_2^n$ of dimension $n-1$ chosen uniformly at random. Since the vectors in $S$ are linearly independent, the probability that no vector in $S$ is in $V$ is $2^{-|S|} \leq \frac{L}{2^n}$. It follows that the restriction $f|_V$ of $f$ to $V$ is a $k$-Fourier-sparse function defined on a linear subspace of dimension $n-1$, and its probability to be Boolean is at most $\frac{L}{2^n}$. Note that one can think of the domain of $f|_V$ as $\Fset_2^{n-1}$, because $V$ and $\Fset_2^{n-1}$ are isomorphic and a composition with an invertible linear transformation does not affect the Fourier-sparsity. Now, let us repeat the above process $n-r-1$ additional times, until we get a linear subspace of dimension $r$. The probability that the function becomes Boolean in one of the steps is at most
$$\frac{L}{2^n} + \frac{L}{2^{n-1}}+\cdots+\frac{L}{2^{r+1}} \leq \frac{L}{2^r} \leq \delta,$$
and we are done.
\end{proof}

We now restate and prove Theorem~\ref{thm:BoolUpperIntro}, which gives an upper bound of $O(k \cdot \log^2 k)$ on the query complexity of testing Booleanity of $k$-Fourier-sparse functions. In the proof, we first apply Lemma~\ref{lemma:restriction} to restrict the input function to a subspace of dimension $O(\log k)$. Then, we apply Theorem~\ref{thm:learning} in an attempt to learn the restricted function and check if it is consistent with some $k$-Fourier-sparse Boolean function.

\newtheorem*{thmb}{Theorem~\ref{thm:BoolUpperIntro}}
\begin{thmb}
For every $k$ there exists a non-adaptive one-sided error tester that using $O(k \cdot \log^2 k)$ queries to an input $k$-Fourier-sparse function $f: \Fset_2^n \rightarrow \R$ decides if $f$ is Boolean or not with constant success probability.
\end{thmb}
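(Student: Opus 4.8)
The plan is to reduce testing Booleanity to the \emph{learning} problem of Corollary~\ref{cor:IntroSample} after first collapsing the number of relevant variables to $O(\log k)$ via Lemma~\ref{lemma:restriction}. The reason one cannot simply sample $f$ and hunt for a non-Boolean value is that, by Claim~\ref{claim:non-zero}, a non-Boolean $k$-Fourier-sparse function may fail to be Boolean on only a $\Theta(1/k^2)$ fraction of inputs, so catching such a value directly would cost $\Omega(k^2)$ queries. The key idea for beating this is to detect non-Booleanity \emph{indirectly}: if the restricted function is non-Boolean then, being $k$-Fourier-sparse, by Claim~\ref{claim:sparse_dist} it disagrees with \emph{every} $k$-Fourier-sparse Boolean function on at least a $1/(2k)$ fraction of inputs, and a disagreement at this much larger rate is exposed with only $O(k)$ samples, up to logarithmic factors.

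Concretely, the tester I would build proceeds as follows. First, pick a uniformly random linear subspace $V \subseteq \Fset_2^n$ of dimension $r = \lceil \log_2(L/\delta_0) \rceil$, where $L = (k^2+k+2)/2$ and $\delta_0$ is a small constant; since $L = \Theta(k^2)$ this gives $r = 2\log_2 k + O(1) = O(\log k)$. Identify $V$ with $\Fset_2^r$ and set $g = f|_V$, which remains $k$-Fourier-sparse because composing with an invertible linear map does not increase Fourier-sparsity. Next, draw $q = O(r \cdot k \log k)$ uniform and independent samples in $V$ and query $f$ there; all query locations are fixed before any answer is seen, so the tester is non-adaptive, and its query count is $O(r \cdot k \log k) = O(k \log^2 k)$ as required. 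Finally, the tester \textbf{accepts} if and only if there exists some $k$-Fourier-sparse Boolean function $h : \Fset_2^r \to \{0,1\}$ agreeing with all the observed values of $g$, and \textbf{rejects} otherwise; we bound only the query cost of this search, not its running time.

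For correctness I would argue the two cases separately. If $f$ is Boolean, then $g$ is itself a $k$-Fourier-sparse Boolean function that trivially agrees with every sample, so a consistent candidate always exists and the tester always accepts, giving the one-sided guarantee that a Boolean input is never rejected. If $f$ is non-Boolean, then by Lemma~\ref{lemma:restriction} the restriction $g$ is non-Boolean except with probability $\delta_0$ over the choice of $V$. Conditioning on a good $V$, $g$ is a fixed $k$-Fourier-sparse function on $\Fset_2^r$ and every $k$-Fourier-sparse Boolean $h$ satisfies $h \neq g$; applying Theorem~\ref{thm:learning} with $n$ replaced by $r$ (valid since $1 < k \le 2^r$), the probability over the samples that any such $h$ agrees with $g$ on all $q$ points is $2^{-\Omega(r \log k)}$. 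Hence with high probability no consistent $k$-Fourier-sparse Boolean candidate exists and the tester rejects. A union bound over the two failure events and a suitable choice of the constant $\delta_0$ (boosted by a constant number of independent repetitions, rejecting if any run rejects) yields rejection probability at least $2/3$ on non-Boolean inputs for every $k>1$, the case $k=1$ being trivial.

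The step I expect to be the main obstacle, and the crux of the whole argument, is engineering the detection of non-Booleanity through learning rather than through raw sampling. One must check that restricting to dimension $r = \Theta(\log k)$ achieves three things at once: it keeps $g$ non-Boolean with good probability, it shrinks the domain enough that the learning sample bound $O(r \cdot k \log k)$ becomes only $O(k \log^2 k)$, and it preserves $k$-Fourier-sparsity so that Theorem~\ref{thm:learning} applies verbatim on $\Fset_2^r$. It is also essential to keep the two sources of randomness, the subspace $V$ and the samples, independent, so that a single union bound over ``the restriction accidentally becomes Boolean'' and ``a spurious consistent Boolean candidate survives'' controls the total error.
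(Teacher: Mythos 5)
Your proposal is correct and follows essentially the same route as the paper's own proof: restrict to a random subspace of dimension $r = O(\log k)$ via Lemma~\ref{lemma:restriction}, take $O(r \cdot k \log k)$ uniform samples inside it, and accept iff some $k$-Fourier-sparse Boolean function on $\Fset_2^r$ is consistent with them, with correctness split exactly as you do between the one-sided Boolean case and an application of Theorem~\ref{thm:learning} in the non-Boolean case. The only detail the paper handles that you gloss over is the small-$n$ regime, where it sets $r = \min(n, \lceil \log_2(100L)\rceil)$ so that a subspace of dimension $r$ actually exists; this is a trivial fix to your construction.
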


\begin{proof}
Consider the tester that given access to an input $k$-Fourier-sparse function $f: \Fset_2^n \rightarrow \R$ acts as follows:
\begin{enumerate}
  \item\label{itm:1} Pick uniformly at random a linear subspace $V$ of $\Fset_2^n$ of dimension $r = \min (n, \lceil\log_2(100L)\rceil)$, where $L = (k^2+k+2)/2$, and let $T$ be an invertible linear transformation mapping $\Fset_2^r$ to $V$.
  \item\label{itm:2} Query $f$ on $O(r \cdot k \log k)$ random vectors chosen uniformly and independently from the subspace $V$. Note that these queries can be seen as uniform and independent random samples from the function $g: \Fset_2^r \rightarrow \R$ defined as $g = f \circ T$.
  \item\label{itm:3} If there exists a $k$-Fourier-sparse Boolean function on $r$ variables that agrees with the above samples of $g$ then accept, and otherwise reject.
\end{enumerate}

We turn to prove the correctness of the above tester. If $f$ is a $k$-Fourier-sparse Boolean function then so is $g$, because a restriction to a subspace and a composition with a linear transformation leave the function $k$-Fourier-sparse and Boolean. Hence, in this case the tester accepts with probability $1$.

On the other hand, if $f$ is a $k$-Fourier-sparse non-Boolean function, then by
Lemma~\ref{lemma:restriction} the restriction of $f$ to the random subspace $V$
of dimension $r$ picked in Item~\ref{itm:1}, as well as the function $g$ defined
in Item~\ref{itm:2}, are also non-Boolean with probability at least $0.99$. In
this case, by Theorem~\ref{thm:learning}, the probability that there is a
$k$-Fourier-sparse Boolean function on $r$ variables that agrees with
$O(r \cdot k \log k)$ uniform and independent random samples from $g$ is
$2^{-\Omega(r \log k)}$, thus the tester correctly rejects with probability
at least, say, $0.9$, as required. Finally, observe that the number of
queries made by the tester is $O(r \cdot k \log k) = O(k \cdot \log^2 k)$.
\end{proof}

\subsection{Lower Bound}

We turn to restate and prove our lower bound on the query complexity of testing Booleanity of $k$-Fourier-sparse functions.

\newtheorem*{thmc}{Theorem~\ref{thm:BoolLowerIntro}}
\begin{thmc}
Every non-adaptive one-sided error tester for Booleanity of $k$-Fourier-sparse functions has query complexity $\Omega(k \cdot \log k)$.
\end{thmc}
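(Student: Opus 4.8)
The plan is to exhibit a hard distribution on non-Boolean $k$-Fourier-sparse inputs on which no query-efficient one-sided tester can find a ``witness'' of non-Booleanity. Fix $n$ even with $k = \Theta(2^{n/2})$, so that $k \log_2 k = \Theta(n \cdot 2^{n/2})$, and let $\mathcal{N}$ be the distribution of $h = \mathbb{1}_{V_1} + \mathbb{1}_{V_2}$ obtained by composing the function $f$ from~(\ref{no_func}) with a uniformly random invertible affine transformation of $\Fset_2^n$; here $V_1, V_2 \subseteq \Fset_2^n$ are affine subspaces of dimension $n/2$ whose directions are complementary, so that they meet in a single point $p$, and $h$ is non-Boolean (equal to $2$) exactly at $p$. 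By Claim~\ref{claim:affineSparse} every such $h$ is $k$-Fourier-sparse. First I would record the key consequence of one-sidedness together with non-adaptivity: a deterministic non-adaptive one-sided tester is a query set $Q$ together with a decision rule that must accept whenever the observed values $h|_Q$ agree with \emph{some} $k$-Fourier-sparse Boolean function; hence it can reject $h$ only when $h|_Q$ is \emph{not} realizable by any $k$-Fourier-sparse Boolean function. Writing the randomized tester as a distribution over such pairs $(Q,\text{rule})$ and swapping the order of averaging over the tester's coins and over the input, it suffices to prove that for every fixed $Q$ with $|Q| = o(n \cdot 2^{n/2})$ one has $\Pr_{h \sim \mathcal{N}}[\,h|_Q \text{ is realizable by a } k\text{-Fourier-sparse Boolean function}\,] \ge 2/3$; this forces the query complexity to be $\Omega(n \cdot 2^{n/2}) = \Omega(k \log k)$.

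The core of the argument is therefore a realizability statement: for a typical instance of $\mathcal{N}$ and a small fixed $Q$, the pattern $h|_Q$ is matched by a genuinely Boolean $k$-Fourier-sparse function. I would produce such a function within the same family, namely a sum $\mathbb{1}_{W_1} + \mathbb{1}_{W_2}$ of indicators of \emph{disjoint} affine subspaces of dimension $n/2$ (which is Boolean and, by Claim~\ref{claim:affineSparse}, $k$-Fourier-sparse). Two easy reductions come first. Since $p = V_1 \cap V_2$ is uniform over $\Fset_2^n$, we have $p \notin Q$ except with probability $|Q|/2^n = o(n/2^{n/2}) \to 0$; conditioned on this, all observed values lie in $\{0,1\}$ and the queried points split into $Q_A = Q \cap V_1$, $Q_B = Q \cap V_2$, and $Q_0 = Q \setminus (V_1 \cup V_2)$, with $h|_Q$ being the indicator of $Q_A \cup Q_B$. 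Because a random affine subspace of dimension $n/2$ contains a fixed point with probability $2^{-n/2}$, we have $\mathbb{E}[|Q_A|] = \mathbb{E}[|Q_B|] = |Q| \cdot 2^{-n/2} = o(n)$, so by Markov's inequality both $|Q_A|$ and $|Q_B|$ are smaller than $n/2$ with high probability. In particular the affine spans of $Q_A$ and of $Q_B$ are proper subspaces of $V_1$ and $V_2$, leaving room to reroute.

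It then remains to find disjoint dimension-$n/2$ affine subspaces $W_1 \supseteq Q_A$ and $W_2 \supseteq Q_B$ that avoid $Q_0$ (and each other's ``wrong'' points), so that $\mathbb{1}_{W_1} + \mathbb{1}_{W_2}$ reproduces $h$ on $Q$. The point is that disjointness of $W_1$ and $W_2$ forces their directions to be non-complementary, which is possible precisely because $Q_A, Q_B$ do not span $V_1, V_2$; concretely I would fix a nonzero affine form $g$ vanishing on $W_1$ and take $W_2$ inside the hyperplane $\{g = 1\}$, which guarantees $W_1 \cap W_2 = \emptyset$. The main obstacle, and the step that pins down the $n \cdot 2^{n/2}$ threshold, is ensuring that $W_2$ can simultaneously contain $Q_B$ and avoid the roughly $|Q_0| \approx |Q|$ zero-valued points: a single random admissible $W_2$ contains about $|Q_0| \cdot 2^{-n/2} = \Theta(n)$ of them in expectation, so no individual choice works. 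I would resolve this by the probabilistic method over the full family of admissible $W_2$: this family has size $2^{\Omega(n^2)}$, which dwarfs the $2^{O(n)}$ ``cost'' of the $\Theta(n)$ expected conflicts, so an alteration (or second-moment) argument yields an admissible $W_2$ meeting none of the forbidden points.

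Carrying out this counting carefully---verifying that the number of conflict-free, $V_1$-avoiding completions stays positive with high probability over $\mathcal{N}$ exactly while $|Q| = o(n \cdot 2^{n/2})$---is the technical heart of the proof, and it is what matches the quantity $k \log k$. I expect the delicate part to be controlling the joint behaviour of the $\Theta(n)$ potential conflicts (they are not independent, since they are governed by membership in a single random subspace), for which a second-moment estimate on the number of conflict-free admissible completions, rather than a crude union bound, should be the right tool.
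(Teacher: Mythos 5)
Your setup coincides with the paper's: the same hard distribution of inputs $\mathbb{1}_{V_1}+\mathbb{1}_{V_2}$, the same use of Yao's principle, and the same exploitation of non-adaptivity plus one-sidedness (a deterministic tester must accept whenever the observed pattern on its query set $Q$ is realizable by some $k$-Fourier-sparse Boolean function). The gap is precisely in the step you yourself flag as the ``technical heart'': the existence of a Boolean witness consistent with $h|_Q$. You insist the witness be $\mathbb{1}_{W_1}+\mathbb{1}_{W_2}$ with \emph{both} subspaces of dimension $n/2$ and disjoint. Disjointness then forces their direction spaces to be non-complementary, so $W_2$ necessarily leaves $V_2$, and you must certify that it misses the up-to-$\Theta(n\cdot 2^{n/2})$ zero-valued query points in $Q_0$. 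As you note, the expected number of collisions is $\Theta(n)\gg 1$, so no union bound works, and you defer to an unproven second-moment/alteration argument. This is not a routine fix, because $Q$ is adversarial and structured choices of $Q$ correlate the collisions strongly. For instance, if $Q$ is itself (a coset of) a linear subspace of dimension about $n/2+\log_2 n$, then after decomposing along $V_1$ and the direction of $V_2$, the set of forbidden directions that $W_2$ must avoid becomes a full linear subspace whose dimension is typically \emph{equal} to $\dim(\mathrm{affspan}(Q_B))$ --- exactly the borderline case for the dimension count --- and with constant probability over the input it exceeds it, in which case every rerouting of the natural form (direction of $Q_B$ plus a subspace of the direction of $V_1$) is blocked. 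So the ``$2^{\Omega(n^2)}$ family versus $2^{O(n)}$ cost'' heuristic is not sound, and the core existence claim is both unsubstantiated and genuinely delicate, possibly false for this restricted witness family.

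The paper dissolves the difficulty with a small change you ruled out by fiat: the witness need not have both subspaces of dimension $n/2$. It takes $g=\mathbb{1}_{V_1}+\mathbb{1}_{V_2'}$, where $V_2'$ has dimension $n/2-1$ and is \emph{contained in} $V_2$, chosen to contain the affine span of $Q_B=Q\cap V_2$ and to miss the intersection point $p$ (possible exactly when $p\notin\mathrm{affspan}(Q_B)$, an event controlled by the same Markov estimate you already have). Because $V_2'\subseteq V_2$, the function $g$ automatically agrees with $h$ on $Q_0$ and on $Q_A$ --- there are no collisions to avoid at all --- and $V_1\cap V_2'=\emptyset$ is immediate since $V_1\cap V_2=\{p\}$. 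The price is sparsity: $\mathbb{1}_{V_2'}$ is $2^{n/2+1}$-Fourier-sparse by Claim~\ref{claim:affineSparse}, so the witness is $3\cdot 2^{n/2}$-sparse rather than $2\cdot 2^{n/2}$-sparse, which is why the paper picks $n$ with $k\geq 3\cdot 2^{n/2}$. If you relax $\dim W_2 = n/2$ and shrink the second subspace inside $V_2$ by one dimension, your argument closes immediately and the entire second-moment machinery becomes unnecessary.
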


\begin{proof}
For a given integer $k$, let $n$ be the largest even integer satisfying $k \geq 3 \cdot 2^{n/2}$. Define a distribution $D_{no}$ over functions in $\Fset_2^n \rightarrow \{0,1,2\}$ as follows. Pick uniformly at random a pair $(V_1,V_2)$ of affine subspaces satisfying $\dim(V_1)=\dim(V_2)=n/2$ and $|V_1 \cap V_2|=1$, and output the sum of indicators $\mathbb{1}_{V_1} + \mathbb{1}_{V_2}$. Notice that, by Claim~\ref{claim:affineSparse}, such a function has Fourier-sparsity at most $2 \cdot 2^{n/2} \leq k$. Thus, a function chosen from $D_{no}$ is $k$-Fourier-sparse and non-Boolean with probability $1$.

Let $T$ be a non-adaptive one-sided error randomized tester for Booleanity
of $k$-Fourier-sparse functions with query complexity $q$ and success
probability at least $2/3$. By Yao's minimax principle, there exists a
deterministic tester $T'$ (obtained by fixing the random coins of $T$)
that rejects a random
function chosen from $D_{no}$ with probability at least $2/3$.
Since $T$ is non-adaptive and has one-sided
error, it follows that $T'$ queries an input function on $q$ {\em fixed}
vectors $a_1,\ldots,a_q \in \Fset_2^n$, accepts every $k$-Fourier-sparse
Boolean function, and rejects a function chosen from $D_{no}$ with
probability at least $2/3$. We turn to prove that
$q > (n \cdot 2^{n/2}) / 1000 = \Omega(k \cdot \log k)$.

Assume in contradiction that $q \leq (n \cdot 2^{n/2}) / 1000$. Let $f$ be a random function chosen from $D_{no}$, that is, $f = \mathbb{1}_{V_1} + \mathbb{1}_{V_2}$ for random affine subspaces $V_1$ and $V_2$ of dimension $n/2$ satisfying $|V_1 \cap V_2| = 1$. For $i=1,2$, let $W_i$ be the affine span of $\{a_1,\ldots,a_q\} \cap V_i$. Let $E$ be the event that the intersection of $W_1$ and $W_2$ is empty. We turn to prove that if the event $E$ happens then the tester $T'$ accepts the function $f$ and that the probability of this event is at least $0.9$. This contradicts the success probability of $T'$ on functions chosen from $D_{no}$ and completes the proof.

\begin{lemma}
If the event $E$ happens then the tester $T'$ accepts the function $f$.
\end{lemma}

\begin{proof}
Assume that the event $E$ happens, i.e., $W_1 \cap W_2 = \emptyset$. Then, there exists an affine subspace $V'_2$ of dimension $n/2-1$ satisfying $W_2 \subseteq V'_2 \subsetneq V_2$ and $V_1 \cap V'_2 = \emptyset$. Consider the function $g = \mathbb{1}_{V_1} + \mathbb{1}_{V'_2}$. By Claim~\ref{claim:affineSparse}, $g$ is a Boolean function whose Fourier-sparsity is at most $3 \cdot 2^{n/2} \leq k$, thus it is accepted by $T'$. However, $g$ satisfies $g(a_i) = f(a_i)$ for every $1 \leq i \leq q$. This implies that $T'$ cannot distinguish between $g$ and $f$, so it must accept $f$ as well.
\end{proof}

\begin{lemma}
The probability of the event $E$ is at least $0.9$.
\end{lemma}

\begin{proof}
Denote by $X$ the number of vectors in $\{a_1,\ldots,a_q\} \cap V_1$. Since $V_1$ is distributed uniformly over all affine subspaces of dimension $n/2$, the probability that $a_i$ belongs to $V_1$ is $2^{-n/2}$ for every $1 \leq i \leq q$ . Thus, by linearity of expectation, $$\expectation[X] = \frac{q}{2^{n/2}} \leq \frac{(n \cdot 2^{n/2}) / 1000}{2^{n/2}} = \frac{n}{1000}.$$
By Markov's inequality, we obtain that
$$\Prob{}{\dim(W_1) \geq \frac{n}{10}} \leq \Prob{}{X \geq \frac{n}{10}} \leq \frac{1}{100}.$$

Now, fix a choice of $V_1$ for which $\dim(W_1) < n/10$, and consider the randomness over the choice of $V_2$. Notice that, conditioned on $V_1$, $V_2$ is distributed uniformly over all the affine subspaces of dimension $n/2$ which contain exactly one vector from $V_1$. By symmetry, every vector of $V_1$ has probability $|V_1|^{-1} = 2^{-n/2}$ to belong to $V_2$. Thus, the probability that the vector that belongs to both $V_1$ and $V_2$ is in $W_1$ is $|W_1| \cdot 2^{-n/2} < 2^{n/10} \cdot  2^{-n/2} = 2^{-2n/5}$.

Finally, the probability that $W_1 \cap W_2 = \emptyset$ is at least the probability that $W_1 \cap V_2 = \emptyset$, and the latter is at least $1-(0.01 + 2^{-2n/5}) \geq 0.9$ for every sufficiently large $n$.
\end{proof}
\end{proof}

\subsection*{Acknowledgments}
We thank Adi Akavia, Shachar Lovett, and Eric Price for useful discussions and comments.

%\newpage
\bibliographystyle{abbrv}
\bibliography{booleanity}

\begin{thebibliography}{10}

\bibitem{AlonKKLR05}
N.~Alon, T.~Kaufman, M.~Krivelevich, S.~Litsyn, and D.~Ron.
\newblock Testing {R}eed-{M}uller codes.
\newblock {\em {IEEE} Trans. on Information Theory}, 51(11):4032--4039, 2005.
\newblock Preliminary version in RANDOM'03.

\bibitem{AndoniPV014}
A.~Andoni, R.~Panigrahy, G.~Valiant, and L.~Zhang.
\newblock Learning sparse polynomial functions.
\newblock In {\em SODA}, pages 500--510, 2014.

\bibitem{BaIPW10}
K.~D. Ba, P.~Indyk, E.~Price, and D.~P. Woodruff.
\newblock Lower bounds for sparse recovery.
\newblock In {\em SODA}, pages 1190--1197, 2010.

\bibitem{BernasconiC99}
A.~Bernasconi and B.~Codenotti.
\newblock Spectral analysis of {B}oolean functions as a graph eigenvalue
  problem.
\newblock {\em {IEEE} Trans. on Computers}, 48(3):345--351, 1999.

\bibitem{Bhattacharyya13}
A.~Bhattacharyya.
\newblock Guest column: {O}n testing affine-invariant properties over finite
  fields.
\newblock {\em SIGACT News}, 44(4):53--72, 2013.

\bibitem{BhattacharyyaKSSZ10}
A.~Bhattacharyya, S.~Kopparty, G.~Schoenebeck, M.~Sudan, and D.~Zuckerman.
\newblock Optimal testing of {R}eed-{M}uller codes.
\newblock In {\em FOCS}, pages 488--497, 2010.

\bibitem{Blum03}
A.~Blum.
\newblock Learning a function of $r$ relevant variables.
\newblock In {\em COLT}, pages 731--733, 2003.

\bibitem{Bourgain14}
J.~Bourgain.
\newblock An improved estimate in the restricted isometry problem.
\newblock In {\em Geometric Aspects of Functional Analysis}, volume 2116 of
  {\em Lecture Notes in Mathematics}, pages 65--70. Springer, 2014.

\bibitem{BruckS92}
J.~Bruck and R.~Smolensky.
\newblock Polynomial threshold functions, ${AC}^{0}$ functions, and spectral
  norms.
\newblock {\em {SIAM} J. Comput.}, 21(1):33--42, 1992.
\newblock Preliminary version in FOCS'90.

\bibitem{CandesT}
E.~J. Cand\`{e}s and T.~Tao.
\newblock Near-optimal signal recovery from random projections: Universal
  encoding strategies?
\newblock {\em {IEEE} Trans. on Information Theory}, 52(12):5406--5425, 2006.

\bibitem{CheraghchiGV13}
M.~Cheraghchi, V.~Guruswami, and A.~Velingker.
\newblock Restricted isometry of {F}ourier matrices and list decodability of
  random linear codes.
\newblock {\em {SIAM} J. Comput.}, 42(5):1888--1914, 2013.
\newblock Preliminary version in SODA'13.

\bibitem{GoldreichL89}
O.~Goldreich and L.~A. Levin.
\newblock A hard-core predicate for all one-way functions.
\newblock In {\em STOC}, pages 25--32, 1989.

\bibitem{GopalanOSSW11}
P.~Gopalan, R.~O'Donnell, R.~A. Servedio, A.~Shpilka, and K.~Wimmer.
\newblock Testing {F}ourier dimensionality and sparsity.
\newblock {\em {SIAM} J. Comput.}, 40(4):1075--1100, 2011.
\newblock Preliminary version in ICALP'09.

\bibitem{GurT13}
T.~Gur and O.~Tamuz.
\newblock Testing {B}ooleanity and the uncertainty principle.
\newblock {\em Chicago J. Theor. Comput. Sci.}, 2013, 2013.

\bibitem{HavivR15rip}
I.~Haviv and O.~Regev.
\newblock The restricted isometry property of subsampled {F}ourier matrices.
\newblock 2015.
\newblock Manuscript.

\bibitem{JuknaEC}
S.~Jukna.
\newblock {\em Extremal Combinatorics: With Applications in Computer Science}.
\newblock Texts in theoretical computer science. Springer-Verlag, second
  edition, 2011.

\bibitem{KaufmanLP12}
T.~Kaufman, S.~Lovett, and E.~Porat.
\newblock Weight distribution and list-decoding size of {R}eed-{M}uller codes.
\newblock {\em {IEEE} Trans. on Information Theory}, 58(5):2689--2696, 2012.
\newblock Preliminary version in ICS'10.

\bibitem{KSDK14}
M.~Kocaoglu, K.~Shanmugam, A.~G. Dimakis, and A.~R. Klivans.
\newblock Sparse polynomial learning and graph sketching.
\newblock In {\em NIPS}, pages 3122--3130, 2014.

\bibitem{KushilevitzM93}
E.~Kushilevitz and Y.~Mansour.
\newblock Learning decision trees using the {F}ourier spectrum.
\newblock {\em {SIAM} J. Comput.}, 22(6):1331--1348, 1993.
\newblock Preliminary version in STOC'91.

\bibitem{LinialMN93}
N.~Linial, Y.~Mansour, and N.~Nisan.
\newblock Constant depth circuits, {F}ourier transform, and learnability.
\newblock {\em J. {ACM}}, 40(3):607--620, 1993.
\newblock Preliminary version in FOCS'89.

\bibitem{MosselOS04}
E.~Mossel, R.~O'Donnell, and R.~A. Servedio.
\newblock Learning functions of $k$ relevant variables.
\newblock {\em J. Comput. Syst. Sci.}, 69(3):421--434, 2004.
\newblock Preliminary vesion in STOC'03.

\bibitem{O'Book14}
R.~O'Donnell.
\newblock {\em Analysis of Boolean Functions}.
\newblock Cambridge University Press, 2014.

\bibitem{RudelsonV08}
M.~Rudelson and R.~Vershynin.
\newblock On sparse reconstruction from {F}ourier and {G}aussian measurements.
\newblock {\em Comm. Pure Appl. Math.}, 61(8):1025--1045, 2008.

\bibitem{Sudan10}
M.~Sudan.
\newblock Invariance in property testing.
\newblock In {\em Property Testing - Current Research and Surveys}, volume
  6390, pages 211--227. Springer, 2010.

\bibitem{Valiant12}
G.~Valiant.
\newblock Finding correlations in subquadratic time, with applications to
  learning parities and juntas.
\newblock In {\em FOCS}, pages 11--20, 2012.

\end{thebibliography}

\end{document}